\documentclass[preprint,12pt,a4paper,twoside,fleqn,sort&compress]{elsarticle}
\usepackage{amssymb,latexsym,amscd,amsmath,amsthm}
\usepackage{pst-all}
\usepackage[varg]{pxfonts}
\usepackage{mathrsfs}
\usepackage{shuffle}
\usepackage{hyperref}
\usepackage{times}

\usepackage{color} %%%red, green, blue, cyan, magenta, and yellow

\usepackage{mathrsfs}
\usepackage{graphicx}

\usepackage{helvet}
\usepackage{courier}

\usepackage{algorithm}
\usepackage[noend]{algorithmic}
\usepackage{verbatim}
\usepackage{flafter}
%%%%%%%%%%%%%%%%%%%%%%%%%%%%%%%%%%%%%%%%%%%%%%%%%%%
%%%%%%%%%%%%%%%%%%%%%%%%%%%%%%%%%%%%%%%%%%%%%%%%%%%
\topmargin-0.5in \oddsidemargin 0mm \evensidemargin 0mm
%%%%%

%\documentclass[preprint,12pt]{elsarticle}
%\usepackage{hyperref}
%\usepackage{times}
%\usepackage{amscd,amsmath,amssymb,amsthm}
%\usepackage{color} %%%red, green, blue, cyan, magenta, and yellow

%\usepackage{mathrsfs}
%\usepackage{graphicx}

%\usepackage{helvet}
%\usepackage{courier}

%\usepackage{algorithm}
%\usepackage[noend]{algorithmic}
%\usepackage{verbatim}
%\usepackage{flafter}

\newcommand{\rw}{\rightarrow}

\newcommand{\pp}{\Phi^{'}}
\newcommand{\ppp}{\Phi^{''}}

\theoremstyle{definition}
\newtheorem{definition}{Definition}[section]
\theoremstyle{plain}
\newtheorem{proposition}{Proposition}[section]
\newtheorem{theorem}{Theorem}[section]
\newtheorem*{thm-other}{Theorem}
\newtheorem{example}{Example}[section]

\newtheorem{lemma}{Lemma}
\theoremstyle{remark}
\newtheorem{remark}{Remark}

\begin{document}

\journal{arXiv}

\begin{frontmatter}

%\title{\Large\bf
%\thanks{This work is supported by National Science Foundation of China (Grant No: 60873119)  and the Higher School Doctoral Subject Foundation of Ministry of Education of China (Grant No:200807180005).}}
%\author{{Yongming Li, Qian Wang, Sanjiang Li}\\
%  {\small College of Computer Science,}
%  {\small Shaanxi Normal University, Xi'an, 710062, China}\\
%  {\small Email:liyongm@snnu.edu.cn, Tel.: +862985310166, Fax:  +862985310161}}

\title{Quantitative Computation Tree Logic Model Checking Based on Generalized Possibility Measures
\thanks{This work was partially supported by National Science
Foundation of China (Grant No: 11271237,61228305) and the Higher School Doctoral
Subject Foundation of Ministry of Education of China (Grant No:20130202110001).}}

\author {Yongming Li\corref{cor1}}
\ead{liyongm@snnu.edu.cn}

\author {Zhanyou Ma}

\address {College of Computer Science, Shaanxi Normal University, Xi'an, 710062, China}
%\address {Department of Foundational Teaching, Beifang University of Nationalities, Yinchuan 750021}
\cortext[cor1]{Corresponding Author}
\begin{abstract}

We study generalized possibilistic computation tree logic model checking in this paper, which is an extension of possibilistic computation logic model checking introduced by Y.Li, Y.Li and Z.Ma \cite{li13}. The system is modeled by generalized possibilistic Kripke structures (GPKS, in short), and the verifying property is specified by a generalized possibilistic computation tree logic (GPoCTL, in short) formula. Based on generalized possibility measures and generalized necessity measures, the method of generalized possibilistic computation tree logic model checking is discussed, and the corresponding algorithm and its complexity are shown in detail. Furthermore, the comparison between PoCTL introduced in \cite{li13,Xue09} and GPoCTL is given. Finally, a thermostat example is given to illustrate the GPoCTL model-checking method.

\end{abstract}

\begin{keyword} Model checking; possibility theory; generalized possibilistic Kripke structure; generalized possibilistic computation tree logic; quantitative property.
\end{keyword}

\end{frontmatter}

\baselineskip 20pt

\section{Introduction}

Model checking \cite{EGP92} is a formal
verification technique consisting of three main steps: modeling the system, specifying the properties of the system (i.e., specification), and verifying whether the properties hold in the system using model-checking algorithms. Systems are usually represented using boolean state-transition models or Kripke structures. Properties of the system are often specified using temporal logics. The verification step gives a boolean answer: either true (the system satisfies the specification) or false with counterexample (the system violates the specification).

%$\Phi\sqcup\Psi$\ \ $\Phi{\rm U}\Psi$

Boolean transition models are useful for the representation and verification of computation systems, such as hardware and software systems. However, boolean state-transition models are often inadequate for the representation of systems that are not purely computational but partly physical, such as hardware and software systems that interact with a physical environment and Cyber-Physical Systems (CPS). Many quantitative extensions of the state-transition model have been proposed for this purpose, such as models that embed state changes into time (\cite{Baier08}), models that assign probabilities (\cite{Baier08}) or possibilities (\cite{li12}) to state changes with uncertainties.

Furthermore, for the application to quantitative models and quantitative specifications, quantitative model-checking approaches have been proposed recently. Different approaches are applicable to different models types including timed (\cite{Baier08}), probabilistic and stochastic (\cite{huth97}), multi-valued (\cite{chechik012,chechik04,chechik06}), quality of service or soft constraints (\cite{lluch05}), discounted sources-restricted (\cite{dealfaro05}), possibilistic (\cite{li13}), etc, methods.

Although possibilistic CTL is more expressive than CTL, it is too restrictive (\cite{li13}). Some uncertainties, which can be described using possibility theory, still could not be handled directly using possibilistic computation tree logic model checking as noted in \cite{li13}, e.g. those systems modeled by possibilistic Kripke structures with vague label functions (see the definition of generalized possibilistic Kripke structures in Section 3 in this paper). To deal with uncertainties in possibility theory, more powerful quantitative model checking  is needed. For this purpose, we shall study quantitative model checking based on generalized possibilistic measures in this paper. Here, the models of systems are formalized as generalized possibilistic Kripke structures (GPKS). Compared with possibilistic Kripke structures (PKS), the initial distribution and state-transition distribution of GPKS have no normal condition restrictions, and the labeling function of GPKS is fuzzy and contains vague information. The specification is quantitative CTL which is called generalized possibilistic CTL (GPoCTL, in short), the interpretation of GPoCTL formula is also quantitative, even if the GPKS is also a PKS, and more possibilistic quantitative information is contained in GPoCTL compared with that in PoCTL, for example, the necessity measure is also introduced in the interpretation of GPoCTL formulae. The related model checking approach and its complexity are presented, and some comparisons are made between PoCTL and GPoCTL.

Since we can use fuzzy sets to represent multi-valued simulation, the techniques used in this paper have some similarities to those used in multi-valued cases (\cite{chechik012}). Of course, some essential differences exist. Indeed, possibilistic measures and necessity measures are used in GPoCTL. There is not any measure introduced for multi-valued cases. We give an illustrative example to show the approach proposed in this paper is efficient and reasonable. In fact, we expect that GPoCTL model checking will be used in the verification of expert systems and diagnosis of intelligent systems.

The content of this paper is arranged as follows. Section 2 gives some introduction of possibility theory, PoCTL and PKS defined in \cite{li12,li13}. Some possibility measures and necessity measures related to PKS and PoCTL are also studied. The necessity measures introduced in this section are new and not defined in \cite{li12,li13}. In Section 3 we give the notion of generalized possibilistic Kripke structures, the related generalized possibility measures induced by the generalized possibilistic Kripke structures. Section 4 introduces the notion of GPoCTL.
In Section 5, the GPoCTL model checking approach is discussed and the related algorithm is presented. Section 6 shows the relationship between GPoCTL and PoCTL. A thermostat example is given in Section 7.
The paper ends with a conclusion.

\section{Preliminaries}

In this section, we give some basic knowledge about the possibility theory, and recall the possibilistic computation tree logic (PoCTL, in short) introduced in \cite{li13}.

\subsection{Possibility theory}

Possibility theory is an uncertainty theory devoted to the handling of incomplete information and is an alternative to probability theory. It differs from the latter by the use of a pair of dual set-functions (possibility and necessity measures) instead of only one. This feature makes it easier to capture partial ignorance. Besides, it is not additive and makes sense on ordinal structures. Professor Lotfi Zadeh (\cite{Zadeh78}) first introduced possibility theory in 1978 as an extension of his theory of fuzzy sets and fuzzy logic. Didier Dubois and Henri Prade (\cite{dubois88,dubois06,Didier14,dubois11}) further contributed to its development.

For simplicity, assume that the universe of discourse $U$ is a nonempty set, and assume that all subsets are measurable. A possibility measure is a function $\Pi$ from the powerset $2^U$ to $[0, 1]$ such that:

(1) $\Pi(\emptyset)=0$, (2) $\Pi(U)=1$, and (3) $\Pi(\bigcup E_i)=\bigvee \Pi(E_i)$ for any subset family $\{E_i\}$ of the universe set $U$, where we use $\bigvee_{i\in I}a_i$ to denote the supremum or the least upper bound of the family of real numbers $\{a_i\}_{i\in I}$, dually, we use $\bigwedge_{i\in I}a_i$ to denote the infimum  or the largest lower bound of the family of real numbers $\{a_i\}_{i\in I}$.

If $\Pi$ only satisfies the conditions (1) and (3), then we call $\Pi$ a generalized possibility measure.

It follows that,the generalized possibility measure on a nonempty set is determined by its behavior on singletons:
\begin{equation}\label{eq:possibility distribution}
\Pi(E)=\bigvee_{x\in E} \Pi(\{x\}).
\end{equation}
The function $\pi: U\longrightarrow [0,1]$ defined by $\pi(x)=\Pi(\{x\})$ is called the possibility distribution of $\Pi$, and the measure $\Pi$ is unique defined by Eq.(\ref{eq:possibility distribution}), i.e., $\Pi$ is unique defined by the possibility distribution $\pi$.

Whereas probability theory uses a single number, the probability, to describe how likely an event is to occur, possibility theory uses two concepts, the possibility and the necessity of the event. For any set $E$, the necessity measure $N$ is defined by,
\begin{equation}\label{eq:necessity measure}
N(E)=1-\Pi(U-E).
\end{equation}
A necessity measure is a function $N$ from the powerset $2^U$ to $[0, 1]$ such that:

(1) $N(\emptyset)=0$, (2) $N(U)=1$, and (3) $N(\bigcap E_i)=\bigwedge N(E_i)$ for any subset family $\{E_i\}$ of the universe set $U$.

If $N$ only satisfies the conditions (2) and (3), then we call $N$ a generalized necessity measure.

{\bf It follows that $\Pi(E)+N(U-E)=1$, and $N$ is the dual of $\Pi$ and vise versa. In general, $\Pi$ and $N$ are not self-dual, this is contrary to probability measure, which is self-dual. As a result, we need both possibility measure and necessity measure to treat uncertainty in the theory of possibility.}

There are four cases that can be interpreted as follows:
(1) $N(E)=1$ means that $E$ is necessary. $E$ is certainly true. It implies that $\Pi(E)=1$.
(2) $\Pi(E)=0$ means that $E$ is impossible. $E$  is certainly false. It implies that $N(E)=0$.
(3) $\Pi(E)=1$  means that $E$ is possible. It would not be surprised at all if $E$ occurs. It leaves $N(E)$  unconstrained.
(4) $N(E)=0$ means that $E$ is unnecessary. It would not be surprised at all if $E$ does not occur. It leaves $\Pi(E)$ unconstrained.

We shall use possibility measures and necessity measures in the possibilistic computation tree logic model checking in this paper.

\subsection{Possibilistic Kripke structures}
Transition systems or Kripke structures are key representations for model checking. Corresponding to possibilistic model checking, we have the notion of possibilistic Kripke structures, which is defined as follows.

\begin{definition} \cite{li12}\label{de:pkripke}
A possibilistic Kripke structure is a tuple $M=(S,P,I,AP,L)$, where

(1) $S$  is a countable, nonempty set of states;

(2) $P:S\times S\longrightarrow [0,1]$ is the transition possibility distribution such that for all states $s$, $\bigvee\limits_{s^{'}\in S}P(s,s^{'})=1$ ;

(3) $I:S\longrightarrow[0,1]$ is the initial distribution, such that $\bigvee\limits_{s\in S}I(s)=1$ ;

(4) $AP$ is a set of atomic propositions;

(5) $L:S\longrightarrow 2^{AP}$  is a labeling function that labels a state $s$ with those atomic propositions in $AP$ that are supposed to hold in $s$.

Furthermore, if the set  $S$  and  $AP$ are finite sets, then $M=(S,P,I,AP,L)$ is called a
finite possibilistic Kripke structure.
\end{definition}

The states $s$ with $I(s)>0$ are considered as the initial states. For state $s$ and $T\subseteq S$, let $P(s,T)$ denote the possibility of moving from $s$ to some state $t\in T$ in a single step, that is,
\begin{equation*}
P(s,T)=\vee_{t\in T}P(s,t).
\end{equation*}

Paths in possibilistic Kripke structure $M$ are infinite paths in the underlying digraph. They are
defined as infinite state sequences $\pi=s_{0}s_{1}s_{2}\cdots\in S^{w}$  such that  $P(s_{i},s_{i+1})>0$ for all $i\in I$.
Let $Paths(M)$ denote the set of all paths in $M$, and $Paths_{fin}(M)$ denote the set of finite path
fragments $s_{0}s_{1}\cdots s_{n}$ where $n\geq 0$ and $P(s_{i},s_{i+1})>0$ for $0\leq i\leq n$. Let $Paths_M(s)$ ($Paths(s)$ if $M$ is understood) denote the set of all
paths in $M$ that start in state $s$. Similarly, $Paths_{M-fin}(s)$ ($Paths_{fin}(s)$ if $M$ is understood) denotes the set of finite path fragments $s_{0}s_{1}\cdots s_{n}$ such that $s_{0}=s$.
The set of direct successors (called  $Post$ ) and direct predecessors (named  $Pre$ ) are defined
as follows:
\begin{equation*}
Post(s)=\{s'\in S\mid P(s,s')> 0\};~~ Pre(s)=\{s' \in S\mid P(s',s)> 0\}.
\end{equation*}

Given a possibilistic Kripke structure $M$, the cylinder set of $\hat{\pi}=s_0\cdots s_n\in Paths_{fin}(M)$ is defined as (\cite{Baier08}) $$Cyl(\hat{\pi})=\{\pi\in Paths(M) | \hat{\pi}\in Pref(\pi)\},$$ where $Pref(\pi)=\{\pi^{\prime} | \pi^{\prime}$ is a finite prefix of $\pi\}$. Then as shown in \cite{li12},  $\Omega=2^{Paths(M)}$ is the algebra generated by $\{Cyl(\hat{\pi})\mid\hat{\pi}\in Paths_{fin}(M)\}$ on $Paths(M)$. That is to say, $\Omega=2^{Paths(M)}$ is the unique subalgebra of $2^{Paths(M)}$ which is closed under arbitrary unions and arbitrary intersections containing $\{Cyl(\hat{\pi}) | \hat{\pi}\in Pref(\pi)\}$.

\begin{definition}\label{def:possibility measure} \cite{li12} For a possibilistic Kripke structure $M$, a function $Po^M: Paths(M)\rightarrow [0,1]$ is defined as follows:
\begin{equation}\label{eq:possibility measure-path}
Po^M(\pi)=I(s_{0})\wedge\bigwedge\limits_{i=0}^\infty P(s_{i},s_{i+1})
\end{equation}
for any $\pi=s_{0}s_{1} \cdots, \pi\in Paths(M).$
Furthermore, we define
\begin{equation}\label{eq:possibility measure}
Po^M(E)=\vee\{Po^M(\pi)\mid\pi\in E\}
\end{equation}
for any $E\subseteq Paths(M)$, then, we have a well-defined function $$Po^M:2^{Paths(M)}\longrightarrow [0,1],$$ $Po^M$ is called the possibility measure over $\Omega=2^{Paths(M)}$ as it satisfies the definition of possibility measure. If $M$ is clear from the context, then $M$ is omitted and we simply write $Po$ for $Po^M$.

\end{definition}

For the above possibility measure $Po$ over $2^{Paths(M)}$, the corresponding necessity measure, write as $Ne$, is defined as follows,

$Ne(E)=1-Po(\overline{E})$,

\noindent where $\overline{E}$ denotes the complement of the subset $E$, i.e., $\overline{E}=Paths(M)-E$.

\subsection{Possibilistic computation tree logic}
\begin{definition} \cite{Xue09} (Syntax of PoCTL) {\sl PoCTL state formulae} over the set $AP$ of atomic propositions are formed according to the following grammar:
\begin{center}
$\Phi ::= true\mid a \mid\Phi_{1} \wedge \Phi_{2}\mid \neg \Phi\mid Po_{J}(\varphi)$
\end{center}
where $a\in AP$, $\varphi$ is a PoCTL path formula and $J\subseteq [0,1]$ is an interval with rational bounds.

{\sl PoCTL path formulae} are formed according to the following grammar:

\begin{center}
$\varphi::=\bigcirc \Phi \mid \Phi_{1}\sqcup \Phi_{2}\mid \Phi_{1}\sqcup^{\leq n} \Phi_{2} $
\end{center}
where $ \Phi$, $ \Phi_{1}$, and $ \Phi_{2}$ are state formulae and $n\in\mathbb{N}$.
\end{definition}

\begin{definition}\label{def:semantics of PoCTL}\cite{Xue09} (Semantics of PoCTL) Let $a\in AP$ be an atomic proposition, $M=(S,P,I,AP,L)$ be a possibilistic Kripke structure, state $s\in S$, $\Phi$, $\Psi$ be PoCTL state formulae, and $\varphi$ be a PoCTL path formula. {\sl The satisfaction relation $\models$} is defined {\sl for state formulae} by
\begin{eqnarray*}
s\models a  & {\rm iff} \ a\in L(s);\\
s\models\neg\Phi & {\rm iff}\ s\not\models\Phi; \\
s\models\Phi\wedge\Psi  & {\rm iff} \ s\models\Phi \ {\rm and}\ s\models\Psi;\\
s\models Po_{J}(\varphi)  & {\rm iff} \ Po(s\models \varphi)\in J, \ {\rm where}\ Po(s\models \varphi)=Po^{M_s}(\{\pi | \pi\in Paths(s), \pi\models \varphi\}).
\end{eqnarray*}
\noindent where $M_s$ results from $M$ by letting $s$ be the unique initial state. Formally, for $M=(S,P,I,AP,L)$ and state $s$,  $M_s$ is defined by $M_s=(S,P,s,AP,L)$ , where $s$ denotes an initial distribution with only one initial state $s$.

For path $\pi$, {\sl the satisfaction relation $\models$ for path formulae} is defined by
\begin{eqnarray*}
\pi\models\bigcirc\Phi  & {\rm iff} \ \pi[1]\models\Phi;\\
\pi\models\Phi\sqcup\Psi  & {\rm iff} \  \exists k\geq0,\pi[k]\models\Psi
\ {\rm and}\
 \pi[i]\models\Phi {\rm \ for\ all} \ 0\leq i\leq k-1;\\
\pi \models \Phi\sqcup^{\leq n}\Psi & {\rm iff} \ \exists 0\leq k\leq n, (\pi[k]\models \Psi\wedge(\forall 0\leq i< k),\pi[i]\models \Phi)).
\end{eqnarray*}
where if $\pi=s_0s_1s_2\cdots$, then $\pi[k]=s_k$ for any $k\geq 0$.
\end{definition}

In particular, the path formulae $\lozenge\Phi$ (``eventually'') and $\square\Phi$ (``always'') have the semantics
$$\pi=s_0s_1\cdots\models \lozenge\Phi {\rm \ iff}\ s_j\models \Phi {\rm \ for\ some\ } j\geq 0,$$
$$\pi=s_0s_1\cdots\models \square\Phi {\rm \ iff}\ s_j\models \Phi {\rm \ for\ all\ } j\geq 0.$$ Alternatively, $\lozenge\Phi=true\sqcup \Phi$.

The intend meaning of the formula $Po(s\models \varphi)$ is the possibility measure of those paths starting at state $s$ satisfy the path formula $\varphi$ for any state $s$, that is,

$Po(s\models \varphi)=Po^{M_s}(\{\pi | \pi\in Paths(s), \pi\models \varphi\})$.

Let us see how the necessity measure can be defined in the interpretation of the PoCTL formulae.

Since $\pi\not\models \bigcirc \Phi$ iff $\pi[1]\not\models \Phi$ iff $\pi[1]\models \neg\Phi$ iff $\pi\models \bigcirc\neg\Phi$, it follows that

$\{\pi | \pi\in paths(s), \pi\not\models \bigcirc \Phi\}=\{\pi | \pi\in paths(s), \pi\models \bigcirc\neg\Phi\}$,

\noindent then we have
\begin{eqnarray*}
\{\pi | \pi\in paths(s), \pi\models \bigcirc\Phi\}&=&\overline{\{\pi | \pi\in paths(s), \pi\not\models \bigcirc \Phi\}}\\
&=&\overline{\{\pi | \pi\in paths(s), \pi\models \bigcirc\neg\Phi\}}.
  \end{eqnarray*}
Hence,
\begin{eqnarray*}
Ne(s\models \bigcirc \Phi)&=& Ne^{M_s}(\{\pi | \pi\in paths(s), \pi\models \bigcirc\Phi\}) \\
&=&1-Po^{M_s}(\{\pi | \pi\in paths(s), \pi\models \bigcirc\neg\Phi\})\\
&=&1-Po(s\models \bigcirc\neg\Phi).
  \end{eqnarray*}
Similarly, we have the following equations,

$Ne(s\models \Phi \sqcup\Psi)=(1-Po(s\models \neg \Psi\sqcup (\neg\Phi\wedge \neg\Psi)))\wedge (1-Po(s\models \square\neg \Phi))$,

$Ne(s\models\Phi \sqcup^{\leq n}\Psi)=(1- Po(s\models\neg \Psi\sqcup^{\leq n} (\neg\Phi\wedge \neg\Psi)))\wedge (1- Po(s\models\square^{\leq n}\neg \Phi))$,

$Ne(s\models \square\Phi)=1-Po(s\models \lozenge\neg \Psi)$,

$Ne(s\models \lozenge\Phi)=1-Po(s\models \square\neg \Psi)$.

If we write a PoCTL state formula $Ne_J(\varphi)$ for a path formula $\varphi$, which have the semantics

$s\models Ne_J(\varphi)$ iff $Ne^{M_s}(\{\pi\in Paths(s) | \pi\models \varphi\})\in J$

\noindent for any PKS $M$, then we have the following presentation of $Ne_J(\varphi)$, where for interval $J=[u,v], (u,v], [u,v), (u,v)$, $DJ=[1-v,1-u], [1-v,1-u), (1-v,1-u], (1-v,1-u)$:
\begin{equation}\label{eq:necessity 1}
Ne_J(\bigcirc \Phi)=\neg Po_{DJ}(\bigcirc\neg\Phi);
\end{equation}
\vspace{-0.6cm}
\begin{equation}\label{eq:necessity 2}
Ne_J(\Phi \sqcup\Psi)=\neg Po_{DJ}(\neg \Psi\sqcup (\neg\Phi\wedge \neg\Psi))\wedge \neg Po_{DJ}(\square\neg \Phi);
\end{equation}
\vspace{-0.6cm}
\begin{equation}\label{eq:necessity 3}
Ne_J(\Phi \sqcup^{\leq n}\Psi)=\neg Po_{DJ}(\neg \Psi\sqcup^{\leq n} (\neg\Phi\wedge \neg\Psi))\wedge \neg Po_{DJ}(\square^{\leq n}\neg \Phi);
\end{equation}
\vspace{-0.6cm}
\begin{equation}\label{eq:necessity 4}
Ne_J(\square\Phi)=\neg Po_{DJ}(\lozenge\neg \Psi);
\end{equation}
\vspace{-0.6cm}
\begin{equation}\label{eq:necessity 5}
Ne_J(\lozenge\Phi)=\neg Po_{DJ}(\square\neg \Psi).
\end{equation}

The above equalities are also the sources that we define the GPoCTL formula $Ne(\bigcirc \Phi), Ne(\Phi \sqcup\Psi), Ne(\Phi \sqcup^{\leq n}\Psi), Ne(\square\Phi)$ and $Ne(\lozenge\Phi)$ in Section 5.

\section{Generalized possibilistic Kripke structures}

 In this section, we extend the notion of PKS and introduce the notion of generalized possibilistic Kripke structures, which is defined as follows.

\begin{definition} \label{de:pkripke}
A generalized possibilistic Kripke structure (GPKS, in short) is a tuple $M=(S,P,I,AP,L)$, where

(1) $S$  is a countable, nonempty set of states;

(2) $P:S\times S\longrightarrow [0,1]$ is a function, called possibilistic transition distribution function;

(3) $I:S\longrightarrow [0,1]$ is a function, called possibilistic initial distribution function;

(4) $AP$ is a set of atomic propositions;

(5) $L:S\times AP\longrightarrow [0,1]$ is a possibilistic labeling function, which can be viewed as function mapping a state $s$ to the fuzzy set of atomic propositions which are possible in the state $s$, i.e., $L(s,a)$ denotes the possibility or truth value of atomic proposition $a$ that is supposed to hold in $s$.

Furthermore, if the set  $S$  and  $AP$ are finite sets, then $M=(S,P,I,AP,L)$ is called a
finite generalized possibilistic Kripke structure.
\end{definition}

\begin{remark} (1) In Definition \ref{de:pkripke}, if we require the transition possibility distribution and initial distribution to be {\sl normal}, i.e., $\vee_{s'\in S}P(s,s')=1$ and $\vee_{s\in S}I(s)=1$, and the labeling function $L$ is also crisp, i.e., $L: S\times AP\longrightarrow \{0,1\}$. Then we obtain the notion of possibilistic Kripke structure (PKS, in short). In this case, we also say that $M$ is normal. This is one of the reasons why we call the structure defined in Definition \ref{de:pkripke} generalized possibilistic Kripke structure.

(2) The possibilistic transition function $P: S\times S\longrightarrow [0,1]$ can also be represented by a fuzzy matrix. For convenience, this fuzzy matrix is also written as $P$, i.e., $$P=(P(s,t))_{s,t\in S},$$ $P$ is also called the (fuzzy) transition matrix of $M$. In \cite{li12}, we also use the symbol $A$ to represent a transition matrix. For the fuzzy matrix $P$, its transitive closure is denoted by $P^+$. When $S$ is finite, and if $S$ has $N$ elements, i.e., $N=|S|$, then $P^+=P\vee P^2\vee\cdots\vee P^N$ \cite{li05}, where $P^{k+1}=P^k\circ P$ for any positive integer number $k$. Here, we use the symbol $\circ$ to represent the max-min composition operation of fuzzy matrixes. Recall that the max-min composition operation
of fuzzy matrixes is similar to ordinary matrix multiplication operation, just let ordinary multiplication and addition operations of real numbers be replaced by minimum and maximum operations of real numbers (\cite{Zadeh78}). For a fuzzy matrix $P$, the reflective and transitive closure of $P$, denoted by $P^{\ast}$, is defined by $P^{\ast}=P^0\vee P^+$, where $P^0$ denote the identity matrix.

For a generalized possibilistic Kripke structure $M=(S,P,I,AP,L)$, using $P^+$ and $P^{\ast}$, we can get two generalized possibilistic Kripke structures $M^+=(S,P^+,I,AP,L)$ and $M^{\ast}=(S,P^{\ast},I,AP,L)$.

(3) A closely related notion is given by (discrete-time) fuzzy Markov chains \cite{kruse87} or (discrete-time) possibilistic Markov chains (\cite{dubois94}) or possibilistic Markov processes (\cite{janssen96}) which are used to model certain fuzzy systems. The only difference between possibilistic Kripke structures and fuzzy (or possibilistic) Markov chains lies in that there is no labeling function in the definition of fuzzy (or possibilistic) Markov chains. In \cite{dubois94}, possibilistic Markov chains are used to model the evolution of the updating problem in a knowledge base that describes the state of an evolving system. Uncertainty comes from incomplete knowledge about the knowledge base, ``one may only have some idea about what is/are the most plausible state(s) of the system, among possible one''(\cite{dubois94}). This type of incomplete knowledge was described in terms of possibility distribution in \cite{dubois94}, the degree of transition possibility distribution denotes the plausible degree of the next state. This provide us one kind of view on the justification of degree and transition of possibilistic Kripke structures.
\end{remark}

%{\bf Problem} Given two GPKSs $M_1$ and $M_2$, how to compute the parellal composition $M_1||M_2$?, series composition $M_1;M_2$?, disjoint unions $M_1\cup M_2$?

\begin{example}\label{ex:running example}
{\rm Let us give some running examples of GPKSs, where states are represented by nodes and transitions by labeled edges.
State names are depicted inside the ovals. Initial states are indicated by having an incoming arrow without source.

 (1) Fig.1 shows a GPKS with fuzzy $P$ and $L$;

 (2) Fig.2 gives a GPKS with crisp $P$ and fuzzy $L$;

 (3) Fig.3 is a PKS;

 (4) Fig.4 presents a GPKS with non-normal fuzzy $P$ and crisp $L$.}

\end{example}

\begin{figure}[ht]
\begin{center}
\includegraphics[scale=0.8]{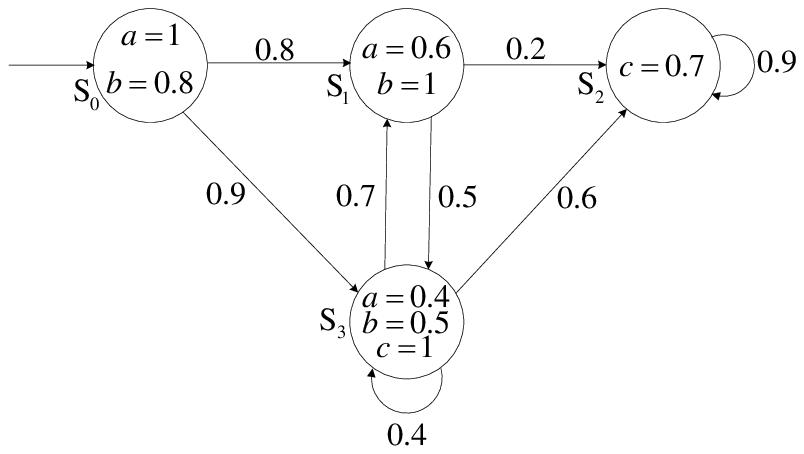}
\center{Fig.1.} A GPKS with fuzzy $P$ and $L$.
\vspace{-0.3cm}
\end{center}
\end{figure}

\begin{figure}[ht]
\begin{center}
\includegraphics[scale=0.8]{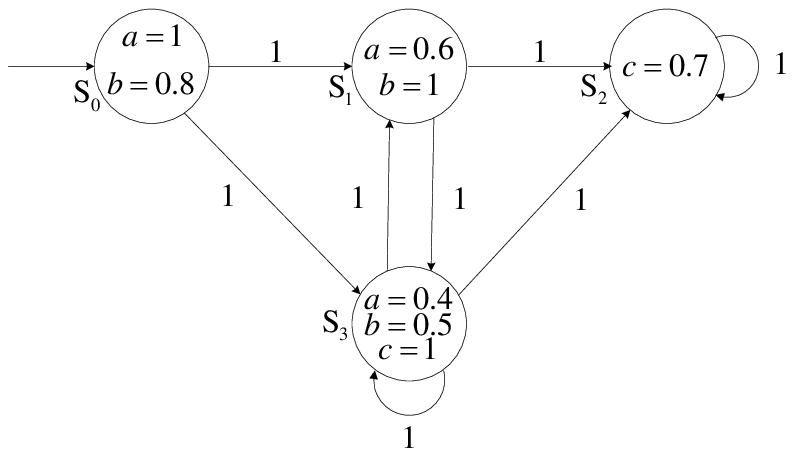}
\center{Fig.2.} A GPKS with crisp $P$ and fuzzy $L$.
\vspace{-0.3cm}
\end{center}
\end{figure}

\begin{figure}[ht]
\begin{center}
\includegraphics[scale=0.8]{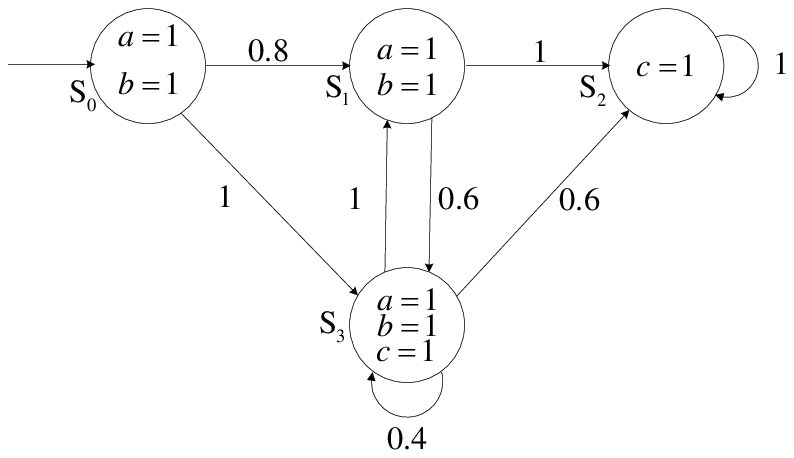}
\center{Fig.3.} A PKS.
\vspace{-0.3cm}
\end{center}
\end{figure}

\begin{figure}[ht]
\begin{center}
\includegraphics[scale=0.8]{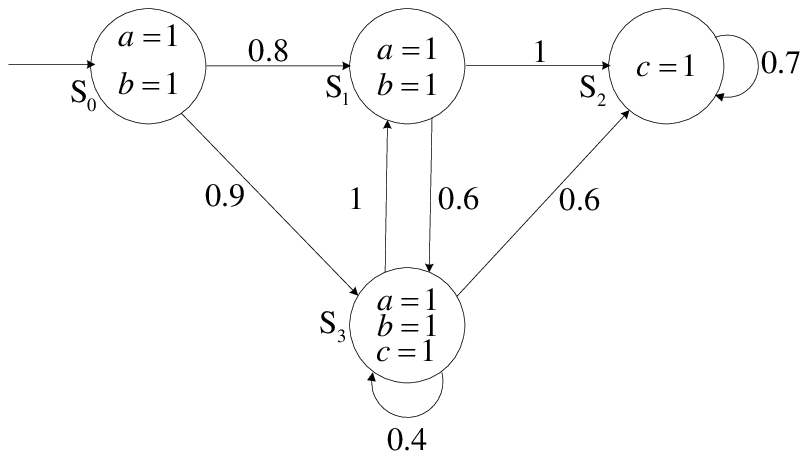}
\center{Fig.4.}A GPKS with non-normal fuzzy $P$ and crisp $L$.
\vspace{-0.3cm}
\end{center}
\end{figure}

The similar notions and notations used for PKS are also applicable for GPKS.

\begin{definition}\label{def:possibility measure} (cf.\cite{li12}) For a generalized possibilistic Kripke structure $M$, a function $Po^M: Paths(M)\rightarrow [0,1]$ is defined as follows:
\begin{equation}\label{eq:possibility measure-path}
Po^M(\pi)=I(s_{0})\wedge\bigwedge\limits_{i=0}^\infty P(s_{i},s_{i+1})
\end{equation}
for any $\pi=s_{0}s_{1} \cdots \in Paths(M).$
Furthermore, we define
\begin{equation}\label{eq:possibility measure}
Po^M(E)=\vee\{Po^M(\pi)\mid\pi\in E\}
\end{equation}
for any $E\subseteq Paths(M)$, then, we have a well-defined function $$Po^M:2^{Paths(M)}\longrightarrow [0,1],$$ $Po^M$ is called the generalized possibility measure over $\Omega=2^{Paths(M)}$ as it has the properties stated in Theorem \ref{th:possibility measure}. If $M$ is clear from the context, then $M$ is omitted and we simply write $Po$ for $Po^M$.

\end{definition}

For a generalized Kripke structure $M$, let us define a function $r_P:S\longrightarrow [0,1]$ as follows, which denotes the largest possibility of the paths in $M$ initialized at the state $s$,
\begin{equation}\label{eq:r-fucntion}
r_P(s)=\bigvee\{P(s,s_1)\wedge P(s_1,s_2)\wedge\cdots | s_1,s_2,\cdots \in S\}.
\end{equation}

The role of the function $r_P$ is stated in Theorem \ref{th:possibility measure}.

How to calculate $r_P$? The following proposition gives an answer.

\begin{proposition}\label{pr:r-function}
For a finite generalized Kripke structure $M$, and a state $s$ in $M$, we have
\begin{equation}\label{eq:r-fucntion}
r_P(s)=\bigvee\{P^+(s,t)\wedge P^+(t,t) | t \in S\}.
\end{equation}
In the matrix notation we have,
\begin{equation}\label{eq:r-fucntion-matrix}
r_P=P^+\circ D,
\end{equation}
\noindent where $D=(P^+(t,t))_{t\in S}.$

In particular, $P$ is normal iff $r_P(s)=1$ for any state $s$.
\end{proposition}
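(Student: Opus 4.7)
The plan is to prove the equality $r_P(s)=\bigvee_{t\in S}(P^+(s,t)\wedge P^+(t,t))$ by showing both inequalities, using finiteness of $S$ through a pigeonhole argument; the matrix form and the normality corollary then follow with little extra work.

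For the $\geq$ direction, I would fix any $t\in S$ and construct an explicit infinite witness path. Since $S$ is finite, $P^+=P\vee P^2\vee\cdots\vee P^N$ and each $P^k(s,t)$ is a max--min of finitely many finite products, so there exists a finite state sequence $s=u_0,u_1,\dots,u_m=t$ whose minimum transition weight equals $P^+(s,t)$, and a finite cycle $t=v_0,v_1,\dots,v_\ell=t$ whose minimum transition weight equals $P^+(t,t)$. Concatenating the prefix with infinitely many copies of the cycle yields an infinite sequence starting at $s$ with transition infimum equal to $P^+(s,t)\wedge P^+(t,t)$; hence $r_P(s)\geq P^+(s,t)\wedge P^+(t,t)$, and taking $\bigvee_t$ gives the first inequality. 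For the $\leq$ direction, let $\pi=s_0s_1s_2\cdots$ with $s_0=s$ be any infinite sequence and set $c=\bigwedge_{i\geq 0}P(s_i,s_{i+1})$. Finiteness of $S$ forces some state $t$ to recur infinitely often on $\pi$. If $t\neq s$, I select the first $i\geq 1$ with $s_i=t$ and a later $j>i$ with $s_j=t$; then $P^i(s,t)\geq c$ and $P^{j-i}(t,t)\geq c$. If $t=s$, I simply pick any $j\geq 1$ with $s_j=s$, obtaining $P^j(s,s)\geq c$. In either case $P^+(s,t)\wedge P^+(t,t)\geq c$, and taking $\bigvee$ over $\pi$ yields the reverse inequality. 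The matrix identity $r_P=P^+\circ D$ with $D=(P^+(t,t))_{t\in S}$ is just the equality rewritten in max--min composition.

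For the normality statement, note that in the finite setting every supremum above is attained. If $P$ is normal, then from each state I can pick a successor with transition-weight $1$ and iterate to obtain an infinite weight-$1$ path, so $r_P(s)=1$ for every $s$. Conversely, if $r_P(s)=1$, an attained witness path provides some successor $s_1$ with $P(s,s_1)=1$, forcing $\bigvee_{s'}P(s,s')=1$. The main obstacle I foresee is the case split in the $\leq$ direction when the recurring state coincides with $s$: one must verify that the conclusion is correctly expressed as $P^+(s,s)\wedge P^+(s,s)$ via a single recurrence of $s$, rather than slipping into a reflexive $P^0$ term for the prefix. Once this case is handled cleanly, the remainder of the argument is routine max--min bookkeeping.
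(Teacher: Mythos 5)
Your proof is correct and takes essentially the same route as the paper: the lower bound via a prefix concatenated with an infinitely repeated cycle realizing $P^+(s,t)\wedge P^+(t,t)$, the upper bound via a repeated state on an infinite path, and the same successor-picking argument for the normality equivalence. The only minor difference is that in the upper-bound direction you bound the value of every path directly, whereas the paper first uses finiteness of the image set of $P$ to argue that the supremum defining $r_P(s)$ is attained by a single witness path and then analyzes that path; both versions are sound.
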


\begin{proof}
Since $S$ is finite, the image set of $P$ is also finite. Observing that the meet operation $\wedge$ does not generate new elements, it follows that the set $\{P(s,s_1)\wedge P(s_1,s_2)\wedge\cdots | s_1,s_2,\cdots \in S\}$ is also finite. Therefore, there exists a sequence $s_1,s_2,\cdots \in S$ such that $r_P(s)=P(s,s_1)\wedge P(s_1,s_2)\wedge\cdots$. Since $S$ is finite, there exist $t\in S$ and $i<j$ such that $s_i=s_j=t$. In this case, $P(s,s_1)\wedge P(s_1,s_2)\wedge\cdots=(P(s,s_1)\wedge\cdots\wedge P(s_{i-1},t))\wedge (P(t,s_{i+1})\wedge\cdots\wedge P(s_{j-1},t))\wedge\cdots\leq (P(s,s_1)\wedge\cdots\wedge P(s_{i-1},t))\wedge (P(t,s_{i+1})\wedge\cdots\wedge P(s_{j-1},t))\leq P^+(s,t)\wedge P^+(t,t)$. Hence, $r_P(s)\leq \bigvee\{P^+(s,t)\wedge P^+(t,t) | t \in S\}$.

Conversely, for any $t\in S$, by the definition of $P^+$, it follows that there exists $s_1,\cdots,s_i=t\in S$ and $s_{i+1},\cdots,s_j$ such that $P^+(s,t)=P(s,s_1)\wedge\cdots\wedge P(s_{i-1},t)$ and $P^+(t,t)=P(t,s_{i+1})\wedge\cdots \wedge P(s_j,t)$. Let $\pi=ss_1\cdots s_{i-1}t(s_{i+1}\cdots s_jt)^{\omega}$, then $P^+(s,t)\wedge P^+(t,t)=P(s,\pi[1])\wedge P(\pi[1],\pi[2])\wedge\cdots$. Hence, $P^+(s,t)\wedge P^+(t,t)\leq r_P(s)$, and thus $\bigvee\{P^+(s,t)\wedge P^+(t,t) | t \in S\}\leq r_P(s)$.

Therefore, $r_P(s)=\bigvee\{P^+(s,t)\wedge P^+(t,t) | t \in S\}.$

Furthermore, if $P$ is normal, i.e., $\bigvee_{t^{\prime}\in S}P(t,t^{\prime})=1$ for any $t\in S$, since $S$ is finite, it follows that there exists $t^{\prime}\in S$ such that $P(t,t^{\prime})=1$ for any $t\in S$. By this observation, from the state $s$, we can choose a sequence of states $s_1,s_2, \cdots$ such that $P(s_i,s_{i+1})=1$ for any $i\geq 0$. This sequence guarantees that $r_P(s)=1$ for any state $s$. Conversely, if $r_P(s)=1$ for any state $s$, then it is obvious that $P$ is normal.
\end{proof}

\begin{theorem}\label{th:possibility on cyl} Let $M$ be a  finite generalized possibilistic Kripke structure. Then the possibility measure of the cylinder sets is given by $Po(Cyl(s_{0}\cdots s_{n}))=I(s_{0})\wedge \bigwedge\limits_{i=0}^{n-1} P(s_{i},s_{i+1})\wedge r_P(s_n)$ when $n>0$ and $Po(Cyl(s_{0}))=I(s_{0})\wedge r_P(s_0)$.
\end{theorem}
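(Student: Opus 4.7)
The plan is to unfold the definitions of $Cyl$ and $Po$ and then separate the common ``prefix part'' (which is the same for every path in the cylinder) from the ``tail part'' (over which we take a supremum). Concretely, every $\pi\in Cyl(s_0\cdots s_n)$ has the form $\pi=s_0s_1\cdots s_n t_1 t_2\cdots$, and by Eq.(\ref{eq:possibility measure-path}),
\[
Po(\pi)=I(s_0)\wedge\bigwedge_{i=0}^{n-1}P(s_i,s_{i+1})\wedge P(s_n,t_1)\wedge\bigwedge_{j=1}^{\infty}P(t_j,t_{j+1}).
\]
Taking the supremum over all admissible tails $(t_1,t_2,\ldots)$ gives $Po(Cyl(s_0\cdots s_n))$.

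The next step is to pull the prefix factor $I(s_0)\wedge\bigwedge_{i=0}^{n-1}P(s_i,s_{i+1})$ out of the supremum. This uses the complete infinite distributivity of $\wedge$ over $\vee$ in the unit interval $[0,1]$, namely $a\wedge\bigvee_k b_k=\bigvee_k(a\wedge b_k)$, which holds because $\wedge,\vee$ are just min and max. After factoring, what remains inside the join is
\[
\bigvee\{P(s_n,t_1)\wedge P(t_1,t_2)\wedge\cdots\mid t_1,t_2,\ldots\in S\},
\]
which is precisely $r_P(s_n)$ by Eq.(\ref{eq:r-fucntion}). Combining the two pieces yields the claimed formula for $n>0$.

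For the base case $n=0$, the cylinder $Cyl(s_0)$ consists of all paths of the form $s_0t_1t_2\cdots$. The empty prefix product is absent, so the computation above degenerates to $Po(Cyl(s_0))=I(s_0)\wedge\bigvee\{P(s_0,t_1)\wedge P(t_1,t_2)\wedge\cdots\}=I(s_0)\wedge r_P(s_0)$, as stated.

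The only real subtlety is the distributivity step, which must be applied to an infinite family of tails; however, since the lattice $([0,1],\vee,\wedge)$ is completely distributive, this step is valid without any further hypothesis. (The finiteness of $S$ is not needed here for the identity itself, but it will be needed if one wishes to actually compute $r_P(s_n)$ via Proposition \ref{pr:r-function}.) Once the distributivity is invoked, the remainder is bookkeeping, so no further obstacle arises.
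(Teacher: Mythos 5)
Your proof is correct and takes essentially the same route as the paper's: unfold $Po$ over the cylinder set, factor the common prefix $I(s_0)\wedge\bigwedge_{i=0}^{n-1}P(s_i,s_{i+1})$ out of the supremum using distributivity of $\wedge$ over $\bigvee$, and identify the remaining supremum over tails as $r_P(s_n)$. Your explicit justification of the infinite distributivity step and the separate treatment of the $n=0$ case are fine additions but do not change the argument.
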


\begin{proof} As $Cyl(s_0\cdots s_n)=\cup\{\pi\in S^{\omega}|s_0\cdots s_n\in Pref(\pi)\}$, we have
\begin{eqnarray*}
&&Po(Cyl(s_0\cdots s_n))\\
&=&\bigvee\{Po(\pi)|s_0\cdots s_n\in Pref(\pi)\}\\
&=&\bigvee\{I(s_0)\wedge\bigwedge_{i=0}^{\infty} P(s_i,s_{i+1}) | s_{n+1},  \cdots\in S\}\\
&=&\{I(s_0)\wedge\bigwedge_{i=0}^{n-1} P(s_i, s_{i+1})\}\wedge\bigvee\{\bigwedge_{i=n}^\infty P(s_i,s_{i+1})| s_i\in S,  i> n\}\\
&=& I(s_{0})\wedge \bigwedge\limits_{i=0}^{n-1} P(s_{i},s_{i+1})\wedge r_P(s_n).
  \end{eqnarray*}
\end{proof}

\begin{theorem}\label{th:possibility measure}  $Po$ is a generalized possibility measure (\cite{grabisch00}) on $\Omega=2^{Paths(M)}$, i.e., $Po$ satisfies the following conditions:

 (1) $Po(\varnothing)=0$, $Po(\Omega)=\bigvee_{s\in S} I(s)\wedge r_P(s)$;

 (2) $Po(\bigcup\limits_{i\in I}A_{i})=\bigvee\limits_{i\in I}Po(A_{i})$ for any $A_{i}\in\Omega$, $i\in I$.

\end{theorem}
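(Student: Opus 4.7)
The plan is to verify the two stated properties directly from the definitions of $Po$ in Eqs.~(\ref{eq:possibility measure-path}) and (\ref{eq:possibility measure}), and of $r_P$ in (\ref{eq:r-fucntion}). No essentially new machinery is needed; the argument is a careful bookkeeping in the complete lattice $[0,1]$, where the infinite distributive law $a\wedge\bigvee_{i}b_i=\bigvee_i(a\wedge b_i)$ holds.

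For part (1), I would first note that $Po(\varnothing)=\bigvee\varnothing=0$ by the usual convention for the supremum of the empty subset of $[0,1]$. For $Po(\Omega)=Po(Paths(M))$, I would split the supremum over $\pi=s_0s_1\cdots$ into a supremum over the initial state $s_0$ followed by a supremum over the remaining states $s_1,s_2,\ldots$:
\begin{equation*}
Po(\Omega)=\bigvee_{s_0\in S}\bigvee_{s_1,s_2,\ldots\in S}\Bigl[I(s_0)\wedge\bigwedge_{i=0}^{\infty}P(s_i,s_{i+1})\Bigr].
\end{equation*}
Since $I(s_0)$ is independent of $s_1,s_2,\ldots$, the distributivity of $\wedge$ over arbitrary joins lets me pull it out, and then the remaining inner supremum is exactly the definition of $r_P(s_0)$ in (\ref{eq:r-fucntion}). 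This yields $Po(\Omega)=\bigvee_{s\in S}I(s)\wedge r_P(s)$, as desired.

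For part (2), I would argue purely at the level of sets of paths. Given any family $\{A_i\}_{i\in I}\subseteq\Omega$, the sup-additivity
\begin{equation*}
Po\Bigl(\bigcup_{i\in I}A_i\Bigr)=\bigvee\{Po(\pi)\mid\pi\in\textstyle\bigcup_{i\in I}A_i\}=\bigvee_{i\in I}\bigvee\{Po(\pi)\mid\pi\in A_i\}=\bigvee_{i\in I}Po(A_i)
\end{equation*}
follows from the trivial observation that $\pi\in\bigcup_i A_i$ iff $\pi\in A_i$ for some $i$, together with the associativity/commutativity of suprema in a complete lattice. This step requires no normalization or finiteness hypothesis.

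The only subtle point, and what I would flag as the ``main obstacle'' (although it is rather mild), is the justification of the two rearrangements in part (1): separating the outer join over $s_0$ from the inner joins, and pulling $I(s_0)$ through the inner $\bigvee$. Both are instances of the complete infinite distributivity of $\wedge$ over $\bigvee$ in $[0,1]$, and I would invoke this explicitly (no finiteness of $S$ is needed for this theorem, in contrast to Proposition~\ref{pr:r-function}). Once this is noted, the proof amounts to rewriting the definitions.
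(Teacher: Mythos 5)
Your proof is correct and is exactly the direct verification the paper intends (the paper only remarks that ``the proof is direct''); the rearrangement you use for $Po(\Omega)$ is the same distributivity manipulation the paper carries out explicitly in its proof of Theorem~\ref{th:possibility on cyl}. The only point worth making explicit is that passing from the supremum over $Paths(M)$ to the supremum over all sequences $s_0s_1\cdots\in S^{\omega}$ (which is what the definition of $r_P$ ranges over) is harmless, since any non-path sequence has some $P(s_i,s_{i+1})=0$ and therefore contributes $0$ to the join.
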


The proof is direct.

For the above generalized possibility measure $Po$, the related generalized necessity is also denoted by $Ne$, i.e., $Ne(E)=1-Po(\overline{E})$ for any subset $E$ of $Paths(M)$.

\section{Generalized possibilistic CTL}
We shall give the temporal logic used for the specifications in this section. We shall introduce a new kind of quantitative temporal logics, which is called generalized possibilistic CTL.

\begin{definition}  (Syntax of GPoCTL) Generalized possibilistic CTL (GPoCTL, in short) state formulae over the set $AP$ of atomic propositions are formed according to the following grammar:
\begin{center}
$\Phi ::= true\mid  a \mid\Phi_{1} \wedge \Phi_{2}\mid \neg \Phi\mid Po(\varphi)$
\end{center}
where $a\in AP$, $\varphi$ is a PoCTL path formula.

PoCTL path formulae are formed according to the following grammar:

\begin{center}
$\varphi::=\bigcirc \Phi \mid \Phi_{1}\sqcup \Phi_{2}\mid \Phi_{1}\sqcup^{\leq n} \Phi_{2} |\square\Phi$
\end{center}
where $\Phi$, $ \Phi_{1}$, and $ \Phi_{2}$ are state formulae and $n\in\mathbb{N}$.
\end{definition}

Using the connectives $\wedge$ and $\neg$, other connectives, such as disjunction $\vee$, implication $\rw$, equivalence $\leftrightarrow$ can be derived as usual,

$\Phi_1\vee \Phi_2=\neg(\neg\Phi_1\wedge\neg\Phi_2)$,

$\Phi_1\rw \Phi_2=\neg\Phi_1\vee \Phi_2$,

$\Phi_1\leftrightarrow \Phi_2=(\Phi_1\rw\Phi_2)\wedge (\Phi_2\rw\Phi_1)$.

\begin{definition}\label{def:semantics of PoCTL}(Semantics of PoCTL) Let $a\in AP$ be an atomic proposition, $M=(S,P,I,AP,L)$ be a possibilistic Kripke structure, $s\in S$ be a state, $\Phi$, $\Psi$ be PoCTL state formulae, and $\varphi$ be a PoCTL path formula. For state formula $\Phi$, its semantics is a fuzzy set $||\Phi||: S\rw [0,1]$, which is defined recursively as follows, for any $s\in S$,

\vspace{-0.6cm}
\begin{equation}\label{eq:true}
||true||(s)=1;
\end{equation}
\vspace{-0.7cm}
\begin{equation}\label{eq:a}
||a||(s)=L(s,a);
\end{equation}
\vspace{-0.6cm}
\begin{equation}\label{eq:wedge}
||\Phi\wedge\Psi||(s)=||\Phi||(s)\wedge ||\Psi||(s);\\
\end{equation}
\vspace{-0.6cm}
\begin{equation}\label{eq:negation}
||\neg\Phi||(s)=1-||\Phi||(s); \\
\end{equation}
\vspace{-0.6cm}
\begin{equation}\label{eq:po}
||Po(\varphi)||(s)=Po(s\models \varphi).\\
\end{equation}

For a path formula $\varphi$, its semantics is a fuzzy set $||\varphi||: Paths(M)\rw [0,1]$, which is defined recursively for $\pi\in Paths(M)$ as follows,
\begin{equation*}\label{eq:bigcirc}
||\bigcirc\Phi||(\pi)=P(\pi[0],\pi[1])\wedge||\Phi||(\pi[1]);
\end{equation*}
\vspace{-0.6cm}
\begin{eqnarray*}\label{eq:sqcup}
||\Phi\sqcup\Psi||(\pi)&=||\Psi||(\pi[0])\vee\bigvee_{j>0}((||\Phi||(\pi[0])\wedge \bigwedge_{k<j}P(\pi[k-1],\pi[k])\\
&\wedge||\Phi||(\pi[k]))\wedge P(\pi[j-1],\pi[j])\wedge||\Psi||(\pi[j])));
 \end{eqnarray*}
\vspace{-0.6cm}
\begin{eqnarray*}\label{eq:restricted sqcup}
||\Phi\sqcup^{\leq n}\Psi||(\pi)&=||\Psi||(\pi[0])\vee\bigvee_{0<j\leq n}((||\Phi||(\pi[0])\wedge
 \bigwedge_{k<j}P(\pi[k-1],\pi[k])\\
 &\wedge||\Phi||(\pi[k]))\wedge P(\pi[j-1],\pi[j])\wedge||\Psi||(\pi[j]))).
\end{eqnarray*}
\vspace{-0.6cm}
\begin{equation*}\label{eq:square}
||\square \Phi||(\pi)=\bigwedge_{i=0}^{\infty}\bigwedge_{j=0}^{i-1} P(\pi([j]),\pi([j+1]))\wedge ||\Phi||(\pi([i])).
\end{equation*}

$Po(s\models \varphi)$ is defined as follows
\begin{equation}\label{eq:possibility}
Po(s\models \varphi)=\bigvee_{\pi\in Paths(s)}Po^{M_s}(\pi)\wedge ||\varphi||(\pi).
\end{equation}
Intuitively, $Po(s\models \varphi)$  denotes the largest possibility of the paths strating at $s$ satisfying the formula $\varphi$.

\end{definition}

Path formula $\lozenge\Phi$ (``eventually'') defined by $\lozenge\Phi=true\sqcup \Phi$ has the semantics
\begin{equation}\label{eq:eventually}
||\lozenge\Phi||(\pi)=\bigvee_{j=0}^{\infty}\bigwedge_{k<j}P(\pi[k-1],\pi[k])\wedge||\Phi||(\pi[j]).
\end{equation}

Dually, we have the following GPoCTL state formulae as presented in Eq.(\ref{eq:necessity 1}-\ref{eq:necessity 5}):
\begin{equation}\label{eq:necessity 21}
Ne(\bigcirc \Phi)=\neg Po(\bigcirc\neg\Phi);
\end{equation}
\vspace{-0.6cm}
\begin{equation}\label{eq:necessity 22}
Ne(\Phi \sqcup\Psi)=\neg Po(\neg \Psi\sqcup (\neg\Phi\wedge \neg\Psi))\wedge \neg Po(\square\neg \Phi);
\end{equation}
\vspace{-0.6cm}
\begin{equation}\label{eq:necessity 23}
Ne(\Phi \sqcup^{\leq n}\Psi)=\neg Po(\neg \Psi\sqcup^{\leq n} (\neg\Phi\wedge \neg\Psi))\wedge \neg Po(\square^{\leq n}\neg \Phi);
\end{equation}
\vspace{-0.6cm}
\begin{equation}\label{eq:necessity 24}
Ne(\square\Phi)=\neg Po(\lozenge\neg \Phi);
\end{equation}
\vspace{-0.6cm}
\begin{equation}\label{eq:necessity 25}
Ne(\lozenge\Phi)=\neg Po(\square\neg \Phi).
\end{equation}

\begin{remark}\label{re:comparison of PoCTL and GPoCTL}
By the semantics of GPoCTL, even if we use normal possibilistic Kripke structures as done in \cite{li12}, the semantics of GPoCTL is still not the same as that of PoCTL. The semantics of GPoCTL contains more possibility information. We shall give explicit explanation using some examples in the following section. \end{remark}

\section{GPoCTL model checking}

Similar to multi-valued CTL model-checking problems \cite{chechik012}, the GPoCTL model-checking problem can be stated as follows:

For a given finite generalized possibilistic
Kripke structure $M$, a state $s$ in $M$, and a PoCTL state formula $\Phi$, compute the value $||\Phi||(s)$.

 We write $M\models \Phi$ for this PoCTL model-checking problem.

 %As shown in the above section, PoCTL is more expressive than CTL. There are some PoCTL model-checking problems that can not be tackled by classical CTL model-checking algorithm. We shall present some methods to tackle PoCTL model-checking problems in this section. The technique of PoCTL model checking is very similar to those of classical and probabilistic CTL model checking. The difference lies in the operations involving in the processing of model checking.

$||\Phi||(s)$ can be calculated inductively on the length of $\Phi$, $|\Phi|$, i.e., $|\Phi|$ denotes the number of subformulae of $\Phi$, which is defined as follows:

$|\Phi|=1$ if $\Phi\in AP\cup \{true\}$.

$|\Phi\wedge\Psi|=|\Phi|+|\Psi|+1$.

$|\neg\Phi|=|\Phi|+1$.

$|Po(\bigcirc\Phi)|=|Po(\square\Phi)|=|\Phi|+1$.

$|Po(\Phi\sqcup\Psi)|=|Po(\Phi\sqcup^{\leq n}\Psi)|=|\Phi|+|\Psi|+1$.

If $\Phi=a\in AP, \neg\Phi, \Phi_1\wedge \Phi_2$, then we can compute $||\Phi||$ inductively using Eq.(\ref{eq:a}), Eq.(\ref{eq:negation}) and Eq.(\ref{eq:wedge}).
For the formula $\Phi=Po(\varphi)$, where $\varphi$ is a path formula. Since $||\Phi||(s)=Po(s\models \varphi)$, the key point is to calculate $Po(s\models \varphi)$ for any state $s$.

There are four ways  to construct path formula $\varphi$, i.e., $\varphi=\bigcirc\Psi$, $\varphi=\Phi\sqcup^{\leq n}\Psi$, $\varphi=\Phi\sqcup\Psi$  or $\varphi=\square\Psi$ for some state formulae $\Phi$ and $\Psi$ and $n\in \mathbb{N}.$

For $\varphi=\bigcirc\Psi$, the next-step operator, the calculation is as follows,
\begin{eqnarray*}
||Po(\bigcirc\Psi)||(s)&=&Po(s\models\bigcirc\Psi)\\
&=&\bigvee_{\pi\in Paths(s)}Po^{M_s}(\pi)\wedge ||\bigcirc\Psi||(\pi)\\
&=&\bigvee_{\pi=ss_1s_2\cdots\in Paths(s)}P(s,s_1)\wedge P(s_1,s_2)\wedge\cdots \wedge P(s,s_1)\wedge ||\Psi||(s_1)\\
&=&\bigvee_{s_1\in S}P(s,s_1)\wedge||\Psi||(s_1)\wedge (\bigvee_{s_2,s_3,\cdots,\in S}P(s_1,s_2)\wedge P(s_2,s_3)\cdots)\\
&=&\bigvee_{s_1\in S}P(s,s_1)\wedge ||\Psi||(s_1)\wedge r_P(s_1)
\end{eqnarray*}
\noindent where $P$ is the transition matrix of $M$. We will give a matrix representation of the next-step operator. For this purpose, let us first fix some notations. For a state formula $\Phi$, write $D_{\Phi}$ for the $|S|\times |S|$ matrix such that $D_{\Phi}(s,t)=||\Phi||(s)$ if $t=s$ and $0$ otherwise, $D_{\Phi}$ is a diagonal fuzzy matrix with dimension $|S|$ such that the entry $D_{\Phi}(s,s)$ is $||\Phi||(s)$ for any $s\in S$, i.e., $D_{\Phi}=diag(||\Phi||(s))_{s\in S}$. For a function $f:S\longrightarrow [0,1]$, we also use $f$ to represent the column vector corresponding to the function $f$, i.e., $f=(f(s))_{s\in S}$.
In the matrix-vector notation we thus
have that the (column) vector $(Po(s\models\bigcirc\Psi))_{s\in S}$ can be computed by multiplying $P$ with the
vector $D_{\Psi}\circ r_P$, i.e.,  we have
\begin{eqnarray}\label{eq:next operator}
Po(\bigcirc\Psi)=(Po(s\models\bigcirc\Psi))_{s\in S}=P\circ D_{\Psi}\circ r_P.
\end{eqnarray}
It follows that, checking the next-step operator thus reduces to two multiplications of fuzzy matrixes.

To calculate the possibility $Po(s\models\varphi)$ for restricted until formula $\varphi=\Phi\sqcup^{\leq n}\Psi$, we have

\begin{eqnarray*}
||Po(\Phi\sqcup^{\leq n}\Psi)||(s)&=&\bigvee_{\pi=ss_1s_2\cdots\in Paths(s)}Po^{M_s}(\pi)\wedge||\Phi\sqcup^{\leq n}\Psi)||(\pi)\\
&=&\bigvee_{\pi=ss_1s_2\cdots\in Paths(s)}P(s,s_1)\wedge P(s_1,s_2)\cdots\wedge (||\Psi||(s)
\vee\bigvee_{0<j\leq n}(||\Phi||(s)\\
&&\wedge\bigwedge_{k<j}P(s_{k-1},s_k)\wedge||\Phi||(s_k))\wedge P(s_{j-1},s_j)\wedge||\Psi||(s_j))\\
&=&(||\Psi||(s)\wedge r(s))\vee(\bigvee_{0<j\leq n}||\Phi||(s)\wedge\bigwedge_{k<j}P(s_{k-1},s_k)\wedge ||\Phi||(s_k)\\
&&\wedge P(s_{j-1},s_j)\wedge||\Psi||(s_j)\wedge r_P(s_j))\\
&=&(\bigvee_{i=0}^n(D_{\Phi}\circ P)^i\circ D_{\Psi}\circ r_P)(s)
\end{eqnarray*}
In the matrix-notation we have a compact expression as follows,
\begin{equation}\label{eq:expression of restricted until}
||Po(\Phi\sqcup^{\leq n}\Psi)||=(||Po(\Phi\sqcup^{\leq n}\Psi)||(s))_{s\in S}=\bigvee_{i=0}^n(D_{\Phi}\circ P)^i\circ D_{\Psi}\circ r_P.
\end{equation}

If we let $N=|S|$, we know that $\bigvee_{i=0}^n(D_{\Phi}\circ P)^i=(D_{\Phi}\circ P)^{\ast}$, the reflexive and transitive closure of the fuzzy matrix $D_{\Phi}\circ P$, for any $n\geq N$. In this case, we have
\begin{equation}\label{eq:expression of restricted until}
||Po(\Phi\sqcup^{\leq n}\Psi)||=(D_{\Phi}\circ P)^{\ast}\circ D_{\Psi}\circ r_P.
\end{equation}

By the definition of $\Phi\sqcup\Psi$, we can see that $Po(s\models\Phi\sqcup\Psi)=\lim_{n\rw \infty}||Po(\Phi\sqcup^{\leq n}\Psi)||(s)$ for any state $s$. It follows that
\begin{equation}\label{eq:expression of until}
||Po(\Phi\sqcup\Psi)||=(||Po(\Phi\sqcup\Psi)||(s))_{s\in S}=(D_{\Phi}\circ P)^{\ast}\circ D_{\Psi}\circ r_P,
\end{equation}
which can be computed effectively.

To calculate the possibility $Po(s\models\varphi)$ for always operator $\varphi=\square\Phi$, note that

$||\square \Phi||(\pi)=\bigwedge_{i=0}^{\infty}\bigwedge_{j=0}^{i-1} P(\pi([j]),\pi([j+1]))\wedge ||\Phi||(\pi([i]))$,

\noindent then we have, for any state $s$,
\begin{eqnarray*}
Po(s\models \square\Phi)&=&\bigvee_{\pi\in Paths(M)} Po^{M_s}(\pi)\wedge ||\square\Phi||(\pi)\\
&=&\bigvee_{\pi\in Paths(s)}\bigwedge_{j=0}^{\infty}P(\pi([j]),\pi([j+1]))\wedge \bigwedge_{j=0}^{\infty}||\Phi||(\pi([j]))
\end{eqnarray*}
Unlike the next formula and until formula, it is not easy to give a matrix representation of $Po(\square\Phi)$. To give an effective method to compute $Po(\square\Phi)$, we use the fixpoint techniques.

First, let us give an observation.

\begin{proposition}\label{pro:image set}
For any GPoCTL state formula $\Phi$ and a finite GPKS $M$, the image set of $||\Phi||$, denoted by $Im(\Phi)$, is a finite subset of the unit interval [0,1].
\end{proposition}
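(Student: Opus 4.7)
The plan is to proceed by structural induction on $\Phi$, with $|\Phi|$ as the induction parameter, and to prove the slightly stronger statement that $Im(\Phi)\subseteq V$, where $V\subseteq [0,1]$ is a fixed finite set of admissible values determined a priori by the data of the GPKS $M$. This strengthening is convenient because it bounds the image uniformly across all formulae, which will later be needed to show that the fixpoint iteration for $||Po(\square\Phi)||$ stabilizes in finitely many steps.

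Concretely, I would set
\[
V_0 \;=\; \{0,1\}\cup\{L(s,a):s\in S,\,a\in AP\}\cup\{I(s):s\in S\}\cup\{P(s,t):s,t\in S\},
\]
which is finite because $S$ and $AP$ are finite, and then put $V=V_0\cup\{1-v:v\in V_0\}$. The key observation is that $V$ is closed under $\wedge$, $\vee$, and $x\mapsto 1-x$: the meet and join operations select an element already in the set, while negation is an involution on $V$. A small but crucial consequence is that arbitrary suprema and infima of nonempty subsets of $V$ still lie in $V$, since on a finite set of reals a supremum is attained as the maximum, and dually for infimum.

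The base cases $\Phi=true$ and $\Phi=a\in AP$ are immediate from Eq.(\ref{eq:true}) and Eq.(\ref{eq:a}). The Boolean cases $\Phi=\neg\Psi$ and $\Phi=\Phi_1\wedge\Phi_2$ follow from Eq.(\ref{eq:negation}), Eq.(\ref{eq:wedge}), and the closure of $V$. The substantive case is $\Phi=Po(\varphi)$, which I would handle by a secondary induction on the path formula $\varphi$. For $\varphi=\bigcirc\Psi$ and $\varphi=\Phi_1\sqcup^{\leq n}\Phi_2$, each $||\varphi||(\pi)$ is a finite combination by $\wedge$ and $\vee$ of values $P(\cdot,\cdot)\in V_0$ and values of $||\Phi_i||$, which lie in $V$ by the primary induction hypothesis; hence $||\varphi||(\pi)\in V$. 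For the unbounded cases $\varphi=\Phi_1\sqcup\Phi_2$ and $\varphi=\square\Phi$ the values involve genuinely infinite $\bigvee$ and $\bigwedge$, and combining with the outer supremum in Eq.(\ref{eq:possibility}) over the (possibly uncountable) set $Paths(s)$, one invokes the closure-under-sup/inf observation above to conclude $Po(s\models\varphi)\in V$.

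The only real obstacle is exactly those infinite joins and meets in the semantics of $\sqcup$ and $\square$, together with the uncountable outer $\bigvee_{\pi\in Paths(s)}$ in the definition of $Po(s\models\varphi)$; the resolution, as noted, is that any such supremum or infimum, taken over values already constrained to lie in the finite set $V$, must coincide with the maximum or minimum element of that subset and therefore remain in $V$. Since $Im(\Phi)\subseteq V$ and $V$ is finite, the image $Im(\Phi)$ is itself a finite subset of $[0,1]$, establishing the proposition.
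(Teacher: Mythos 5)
Your proposal is correct and follows essentially the same strategy as the paper's proof: fix a finite set of admissible values determined by the model, observe that it is closed under $\wedge$, $\vee$ and $x\mapsto 1-x$ (and hence under arbitrary suprema and infima, which are attained on a finite set), and induct on the structure of $\Phi$. If anything, your version is executed more carefully than the paper's, whose set $U$ is seeded only with the label values $L(s,a)$ and their complements and therefore, as literally written, omits the transition possibilities $P(s,t)$, the initial values $I(s)$, and the constants $0,1$ that genuinely appear in the semantics of path formulae; your $V_0$ repairs this.
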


\begin{proof}
Write $U$ the set of the union of the image set of atomic proposition $a$ and its negation $\neg a$ for $a\in AP$, i.e., $U=\cup\{Im(||a||)\cup Im(||\neg a||) | a\in AP\}$. Since $M$ is a finite GPKS,  $U$ is a finite subset of the unit interval [0,1]. Since the minimum operation and the maximum operation on $U$ do not generate any new elements except the set $U$, the image set of any state formula $\Phi$ is contained in the set $U$. It follows that the image set of $||\Phi||$ is also finite.
\end{proof}

\begin{proposition}\label{pro:image set}
For any GPoCTL state formula $\Phi$ and a finite GPKS $M$, the function defined by $f(Z)=||\Phi||\wedge||Po(\bigcirc Z)||$, where $||Po(\bigcirc Z)||=P\circ D_Z\circ r_P$, which is from the set of possibility distributions over the state set $S$ into itself, has a greatest fixpoint, and the greatest fixpoint of $f$ is just $||Po(\square\Phi)||$.
\end{proposition}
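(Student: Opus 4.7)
My plan is to establish the result in two parts: existence of the greatest fixpoint via a Knaster--Tarski style argument, and then identification of that fixpoint with $||Po(\square\Phi)||$ via a Kleene iteration from the top.

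First I would check that $f$ is a well-defined, order-preserving self-map of the complete lattice $\mathcal{L}=[0,1]^S$ under pointwise ordering. Monotonicity follows because $D_Z$ depends monotonically on $Z$ (it is just diagonal placement of $Z$'s values), the max-min composition $P\circ D_Z\circ r_P$ preserves pointwise order, and taking the pointwise meet with the fixed distribution $||\Phi||$ preserves order too. Since $\mathcal{L}$ is a complete lattice and $f$ is monotone, Knaster--Tarski immediately gives a greatest fixpoint $Z^{\ast}$, and moreover Proposition~5.1 (together with finiteness of $S$) shows that only finitely many values are ever attained during the iteration, so $Z^{\ast}=\bigwedge_{n\ge 0}f^{n}(\mathbf{1})$ is actually reached in finitely many steps, with $\mathbf{1}$ the constantly-$1$ distribution.

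Next I would verify that $||Po(\square\Phi)||$ is a fixpoint of $f$. Expanding the semantics of $\square\Phi$ by separating the first state of the path from its tail, one obtains
\begin{equation*}
||Po(\square\Phi)||(s)=||\Phi||(s)\wedge\bigvee_{s_1\in S}P(s,s_1)\wedge||Po(\square\Phi)||(s_1).
\end{equation*}
To reconcile this with $f$, which carries an extra $r_P(s_1)$ factor, I would observe the monotonicity chain $||Po(\square\Phi)||(s_1)\le r_P(s_1)$ directly from the definitions (the former is a supremum of conjunctions that includes the conjunction defining $r_P$, further cut down by $||\Phi||$-values), so the $r_P(s_1)$ factor is absorbed and $f(||Po(\square\Phi)||)=||Po(\square\Phi)||$.

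For maximality I would compute $f^{n}(\mathbf{1})$ inductively. A routine unrolling of the defining equation gives
\begin{equation*}
f^{n}(\mathbf{1})(s)=\bigvee_{s_1,\ldots,s_{n-1}\in S}\Bigl(\bigwedge_{j=0}^{n-1}||\Phi||(s_j)\Bigr)\wedge\Bigl(\bigwedge_{j=0}^{n-2}P(s_j,s_{j+1})\Bigr)\wedge r_P(s_{n-1}),
\end{equation*}
where $s_0=s$, using the identity $r_P(s)=\bigvee_{s_1}P(s,s_1)\wedge r_P(s_1)$ derived from Proposition~3.1. Expanding $r_P(s_{n-1})$ as a supremum over infinite continuations and exchanging the two suprema, one recognises that the meet of the $f^{n}(\mathbf{1})$ over $n\ge 1$ equals $\bigvee_{\pi\in Paths(s)}\bigwedge_{j\ge 0}||\Phi||(\pi[j])\wedge\bigwedge_{j\ge 0}P(\pi[j],\pi[j+1])$, which is precisely $||Po(\square\Phi)||(s)$.

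The main obstacle I anticipate is the last step: interchanging the infimum over $n$ with the supremum over path continuations is not automatic in a general lattice, but here it is legitimate because Proposition~5.1 guarantees that $f^{n}(\mathbf{1})$ only ranges over a finite set of values, so the descending chain stabilises after finitely many steps at a value which, by the unrolled formula above, coincides with $||Po(\square\Phi)||$. Combining this with Knaster--Tarski yields the claim.
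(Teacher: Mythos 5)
Your proposal is correct, and its first two components coincide with the paper's own proof: existence of the greatest fixpoint is obtained exactly as in the paper (monotonicity of $f$ plus the finiteness of the image set established in the preceding proposition, so the descending Kleene chain from $\mathbf{1}$ stabilises after finitely many steps at the greatest fixpoint), and the verification that $||Po(\square\Phi)||$ is a fixpoint is the same one-step unfolding of the path semantics --- indeed you are slightly more careful than the paper here, which silently drops the factor $r_P(s_1)$ where you explicitly justify its absorption via $||Po(\square\Phi)||(s_1)\le r_P(s_1)$. Where you genuinely diverge is the maximality step: the paper takes an \emph{arbitrary} fixpoint $Z$ and unrolls its defining equation to bound $Z(s)$ by a supremum over ever longer prefixes, concluding $Z\le ||Po(\square\Phi)||$; you instead compute the Kleene iterates $f^{n}(\mathbf{1})$ in closed form and identify their meet with $||Po(\square\Phi)||$ directly, which is more constructive and doubles as a correctness proof of the iteration actually run by Algorithm 1. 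Both routes hinge on the same delicate point, the interchange $\bigwedge_{n}\bigvee_{\mathrm{prefixes}}\cdots=\bigvee_{\mathrm{paths}}\bigwedge_{j}\cdots$, and your stated justification (``the value set is finite, so the chain stabilises'') is not quite sufficient on its own: stabilisation tells you the limit $c$ is attained at a finite stage, but to conclude $c\le ||Po(\square\Phi)||(s)$ you must still exhibit an infinite path whose value is at least $c$, which needs finiteness of $S$ together with a K\"onig's-lemma or state-repetition (pumping) argument of the kind used in the paper's characterisation of $r_P$. The paper's own final chain of inequalities elides exactly the same point, so this is a shared informality rather than a defect peculiar to your argument, but it is the one place where your write-up should be tightened.
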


\begin{proof}
Let $Z_0=(1,1,\cdots,1)^T$ be the greatest vector with entries 1. Inductively, we can define $Z_{i+1}=f(Z_i)$. Since $f$ is monotong, i.e., if $Z^{'}\leq Z^{"}$, then $f(Z^{'})\leq f(Z^{"})$, where $Z^{'}\leq Z^{''}$ means that $Z^{'}(s)\leq Z^{''}(s)$ for any state $s$. Then we have the chain $Z_0\geq Z_1\geq Z_2\geq\cdots \geq Z_i\geq Z_{i+1}\geq \cdots$.

Since $Im(||\Phi||)$ is finite, and the operations involved in the function $f$ do not generate any new elements except $U$, it follows that $Im(Z_i)\subseteq U$, which means that $Z_i$ is a function from the state set $S$ into the finite set $U$. Since the set of all the functions from $S$ into $U$ is a finite set, it follows that there exists $k$ such that $Z_{k+1}=Z_k$, i.e., $f(Z_k)=Z_k$. We show that $Z_k$ is the greatest fixpoint of $f$. It is almost obvious that, if $Z$ is a fixpoint of $f$, then $Z\leq Z_0$. Since $f$ is monotone, it follows that $Z=f(Z)\leq Z_1$. Inductively, we have $Z\leq Z_k$. Hence, $Z_k$ is the greatest fixpoint of $f$.

Let $A=||Po(\square\Phi)||$. Then $A$ is defined as,
$A(s)=\bigvee_{\pi\in Paths(s)}\bigwedge_{j=0}^{\infty} P(\pi([j],\pi([j+1]))\wedge \bigwedge_{j=0}^{\infty}||\Phi||(\pi([j]))$, for any state $s$.

First, let us show that $A$ is a fixpoint of $f$.
For any state $s$, we have,
\begin{eqnarray*}
f(A)(s)&=&||\Phi||(s)\wedge ||Po(\bigcirc A)||(s)\\
&=&||\Phi||(s)\wedge \bigvee_{s_1\in S}P(s,s_1)\wedge A(s_1)\\
&=&||\Phi||(s)\wedge \bigvee_{s_1\in S}P(s,s_1)\wedge \bigvee_{\pi\in Paths(s_1)}\bigwedge_{j=1}^{\infty} P(\pi([j],\pi([j+1]))\wedge \bigwedge_{j=1}^{\infty}||\Phi||(\pi([j]))\\
&=&\bigvee_{\pi\in Paths(s_1)}\bigvee_{s_1\in S}P(s,s_1)\wedge\bigwedge_{j=1}^{\infty} P(\pi([j],\pi([j+1]))\wedge \bigwedge_{j=0}^{\infty}||\Phi||(\pi([j]))\\
&=&\bigvee_{\pi\in Paths(s)}\bigwedge_{j=0}^{\infty} P(\pi([j],\pi([j+1]))\wedge \bigwedge_{j=0}^{\infty}||\Phi||(\pi([j]))\\
&=&A(s)
\end{eqnarray*}
Hence, $A$ is a fixpoint of $f$.

Second, we want to show that $A$ is the greatest fixpoint of $f$. If $Z$ is a fixpoint of $f$, i.e., $Z=||\Phi||\wedge ||Po(\bigcirc Z)||=||\Phi||\wedge P\circ D_Z\circ r_P$, then we have,
\begin{eqnarray*}
Z(s)&=&||\Phi||(s)\wedge \bigvee_{s_1\in S}P(s,s_1)\wedge Z(s_1)\wedge r_P(s_1)\\
&\leq& ||\Phi||(s)\wedge \bigvee_{s_1,s_2\in S}P(s,s_1)\wedge ||\Phi||(s_1)\wedge P(s_1,s_2)\wedge Z(s_2)\\
&\leq&  ||\Phi||(s)\wedge \bigvee_{s_1,s_2,\cdots \in S}P(s,s_1)\wedge ||\Phi||(s_1)\wedge P(s_1,s_2)\wedge ||\Phi||(s_2)\wedge P(s_2,s_3)\wedge \cdots\\
&\leq&  \bigvee_{\pi\in Paths(s)}\bigwedge P(\pi([j],\pi([j+1]))\wedge \bigwedge_{j=0}^{\infty}||\Phi||(\pi([j]))\\
&=& A(s)
\end{eqnarray*}
That is to say, $Z\leq A$. Hence, $A=||Po(\square\Phi)||$ is the greatest fixpoint of $f$.
\end{proof}

What is its time complexity of the fixpoint computation of $f(Z)=||\Phi||\wedge ||Po(\bigcirc Z)||$? Let us give some analysis as follows:
The $n$th iteration of the fixpoint computation of $f(Z)(s)=||\Phi||(s)\wedge ||Po(\bigcirc Z)||(s)$ computes the least upper bound of the values of all paths of length $n$ starting from $s$ satisfying $\Phi$. Since the state space $S$ is finite, for any path $\pi$ of length greater than $|S|+1$, there exists a path $\pi^{\prime}$ of length at most $|S|+1$, whose value is above the value $\pi$. Thus, the fixpoint computation converges after at most $|S|+2$ iterations. Note each iteration of fixpoint computation of $f$ involves only the operations of matrix product and the maximum and minimum operations of real numbers, each iteration takes at most $O(|S|^2)$.  Thus, each fixpoint requires $O(|S|^3)$.

This completes the computation of the state formula $Po(\varphi)$.

%\begin{remark} {\rm For a path formula $\varphi$, we can also define two state formulae using universal and existential quantifiers $\exists\varphi$ and $\forall\varphi$ whose semantics are as follows,
%\begin{eqnarray*}
%||\exists\varphi||(s)=\bigvee_{\pi\in Paths(M)}||\varphi||(\pi);\\
%||\forall\varphi||(s)=\bigwedge_{\pi\in Paths(M)}||\varphi||(\pi).
%\end{eqnarray*}

%It seems that these two state formulae can not be defined using GPoCTL ({\bf We leave it as an open problem}). However, if $r_P(s)=1$ for any state $s$, i.e., $P$ is normal, then $Po(\varphi)=\exists \varphi$ and $\neg Po(\neg\varphi)=\forall\varphi$. In general, the calculations of $\exists\varphi$ and $\forall\varphi$ can be processed completely as that of $Po(\varphi)$ and $Ne(\varphi)$, respectively. We simply write the expressions of $\exists\varphi$ as follows.
%\begin{eqnarray*}
%||\exists\bigcirc\Psi||=P\circ ||\Psi||;\\
%||\exists\Phi\sqcup^{\leq n}\Psi||=\bigvee_{i=0}^n(D_{\Phi}\circ P)^i\circ ||\Psi||;\\
%||\exists\Phi\sqcup\Psi||=(D_{\Phi}\circ P)^{\ast}\circ ||\Psi||.
%\end{eqnarray*}}
%\end{remark}

In the calculation of $(||\Phi||(s))_{s\in S}$ for a state formula $\Phi$, we only need to perform (fuzzy) matrix multiplication at most $|S|+3$ times or perform iteration of fixpoint computation of $f$ at most $|S|+2$ times. It follows that the time complexity of GPoCTL model checking of a finite generalized  possibilistic Kripke structure $M$ and a GPoCTL formula $\Phi$ can be presented as follows.

\begin{theorem}\label{th:time of PoCTL} ({\rm{Time Complexity of GPoCTL Model Checking}})
For a finite possibilistic Kripke structure $M$ and a GPoCTL formula $\Phi$, the GPoCTL model-checking problem $M\models \Phi$ can be determined in time ${\cal O}(size(M)\cdot poly(|S|)\cdot |\Phi|)$, where $|\Phi|$ denotes the number of subformulae of $\Phi$, $poly(N)$ denotes the polynomial function of $N$.
\end{theorem}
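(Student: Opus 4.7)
The plan is to proceed by structural induction on $\Phi$, processing its subformulae in order of increasing size and maintaining, for each already-handled subformula $\psi$, the vector $(||\psi||(s))_{s \in S}$ in a table. Before the induction starts, I would perform a one-time preprocessing step: compute the fuzzy transitive closure $P^{+}$ via the identity $P^{+} = P \vee P^{2} \vee \cdots \vee P^{N}$ with $N = |S|$, and then extract $r_{P}$ using Proposition \ref{pr:r-function} as the max-min product $P^{+} \circ D$, where $D$ is the diagonal vector of $P^{+}$. Since each max-min matrix composition costs $O(|S|^{3})$, this preprocessing is polynomial in $|S|$.

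For the inductive step, the easy cases are direct. The atomic case $||a||(s) = L(s,a)$ is read off the labeling function in $O(|S|)$ time; the Boolean cases $\neg\Phi$ and $\Phi_{1} \wedge \Phi_{2}$ combine tabulated entries pointwise, also $O(|S|)$. For the next-operator, Eq.(\ref{eq:next operator}) gives $||Po(\bigcirc \Psi)|| = P \circ D_{\Psi} \circ r_{P}$, so two fuzzy matrix-vector products suffice at cost $O(|S|^{2})$. For bounded and unbounded until, the formulae in Eq.(\ref{eq:expression of restricted until}) and Eq.(\ref{eq:expression of until}) reduce the task to computing the reflexive-transitive closure $(D_{\Phi} \circ P)^{\ast}$ and multiplying by $D_{\Psi} \circ r_{P}$; the closure is polynomial in $|S|$ and the subsequent products cost $O(|S|^{2})$ each, so the bound remains polynomial per subformula.

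The only delicate case is $\psi = Po(\square \Phi)$, for which no closed-form matrix expression is available. Here I would invoke Proposition \ref{pro:image set} and iterate the monotone decreasing chain $Z_{i+1} = f(Z_{i}) = ||\Phi|| \wedge (P \circ D_{Z_{i}} \circ r_{P})$ starting from the all-ones vector $Z_{0}$. Each iteration consists of one matrix-vector product followed by a pointwise meet, costing $O(|S|^{2})$. The main obstacle is bounding the iteration count, but this is exactly the content of the discussion following Proposition \ref{pro:image set}: all coordinates of every $Z_{i}$ lie in the finite value set $U$ generated from the atomic propositions and their negations, and the length argument on paths shows the chain stabilises in at most $|S|+2$ steps. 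The fixpoint then coincides with $||Po(\square \Phi)||$, so the always-case is handled in $O(|S|^{3})$ per subformula.

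Finally, I would aggregate the per-subformula costs. Since each of the $|\Phi|$ subformulae is handled in time polynomial in $|S|$, and the factor $size(M)$ absorbs the cost of accessing the transition and labeling structure of $M$ (including the possibly larger descriptions of the fuzzy matrix entries), multiplying across the outer induction yields the claimed bound $\mathcal{O}(size(M) \cdot poly(|S|) \cdot |\Phi|)$. The whole argument thus reduces to the routine bookkeeping of induction plus the fixpoint-termination observation, which is the only substantive point and is already resolved by Proposition \ref{pro:image set}.
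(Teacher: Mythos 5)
Your proposal is correct and follows essentially the same route as the paper: precompute $r_P$ via Proposition \ref{pr:r-function}, evaluate each of the $|\Phi|$ subformulae bottom-up using the matrix expressions for $\bigcirc$, $\sqcup$ and $\sqcup^{\leq n}$, handle $Po(\square\Phi)$ by the monotone fixpoint iteration whose convergence in at most $|S|+2$ steps is exactly the paper's argument following Proposition \ref{pro:image set}, and multiply the per-subformula polynomial cost by $|\Phi|$. The paper's own justification is only the brief remark preceding the theorem, so your write-up is a faithful (and more explicit) rendering of the intended proof.
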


The corresponding algorithm can be presented here.

\vspace{-0.1cm}

\line(2,0){335}

\vspace{-0.1cm}

{\bf Algorithm 1:}  Computing the greatest fixpoint

\vspace{-0.1cm}

\line(2,0){335}

\vspace{-0.1cm}

Input:\ \ A function $f$ from the set of possibility distributions over the state set $S$ into itself.

Output:\ \ The greatest fixpoint of $f$.

\vspace{-0.1cm}

\line(2,0){335}

\vspace{-0.1cm}

Procedure \ \  Fixpoint$(x,f)$

\ \ \ $x^{\prime}\longleftarrow f(x)$

\ \ \ while $x\not= x^{\prime}$ do

\ \ \ $x\longleftarrow x^{\prime}$

\ \ \  $x^{\prime}\longleftarrow f(x)$

\ \ \ end while

 return $x$

End Procedure

\vspace{-0.1cm}

\line(2,0){335}

\vspace{-0.1cm}

\line(2,0){335}

\vspace{-0.1cm}

{\bf Algorithm 2:}  GPoCTL Model Checking

\vspace{-0.1cm}

\line(2,0){335}

\vspace{-0.1cm}

Input:\ \ A GPKS $M$ and a GPoCTL formula $\Phi$.

Output:\ \ The possibility $s\models \Phi$, i.e., $||\Phi||(s)$, for every state $s$ in $M$.

\vspace{-0.1cm}

\line(2,0){335}

\vspace{-0.1cm}

Procedure \ \  GPoCTLCheck($\Phi$)

Case \ \  $\Phi$

$true$ \ \   return \ \ $(1)_{s\in S}$

 $a$ \ \  return \ \  $(L(s,a))_{s\in S}$

 $\neg \Phi$ \ \     return \ \  $(1-||\Phi||(s))_{s\in S}$

 $\Phi_1\wedge \Phi_2$ \ \  return \ \  $(||\Phi_1||(s)\wedge ||\Phi_2||(s))_{s\in S}$

 $Po(\bigcirc\Phi)$ \ \ return \ \  $P\circ D_{\Phi}\circ r_P$

 $Po(\Phi_1\sqcup^{\leq n}\Phi_2)$ \ \ return \ \  $\bigvee_{i=0}^n(D_{\Phi_1}\circ P)^i\circ D_{\Phi_2}\circ r_P$

 $Po(\Phi_1\sqcup \Phi_2)$ \ \ return \ \  $(D_{\Phi_1}\circ P)^{\ast}\circ D_{\Phi_2}\circ r_P$

 $Po(\lozenge \Phi)$ \ \  return \ \  $P^{\ast}\circ D_{\Phi}\circ r_P$

 $Po(\square \Phi)$ \ \  return \ \  Fixpoint$((1)_{s\in S},f_{\Phi})$

End Case

End Procedure

\vspace{-0.1cm}

\line(2,0){335}

\vspace{-0.1cm}

Here, $P=(P(s,t))_{s,t\in S}$, $D_{\Phi}=diag(||\Phi(s)||)_{s\in S}$, $r_P=P^{+}\circ D$, $P^+=P\vee P^2\vee\cdots\vee P^N$, $D=(P^+(s,s))_{s\in S}$,  $P^{\ast}=P^0\vee P^+$, where $N=|S|$, and $P^0$ denotes the $N\times N$ identity matrix, $f_{\Phi}(Z)=||\Phi||\wedge P\circ D_Z\circ r_P$. For a vector $r=(r(i))_{i\in I}$, $\neg r=(1-r(i))_{i\in I}$.

\vspace{-0.1cm}

\line(2,0){335}

\vspace{-0.1cm}

We give an example to show the methods of this section.

\begin{example}
{\rm We give some calculations using Example \ref{ex:running example}. For the path formula $\varphi=\bigcirc a$, and for a path $\pi\in Paths(s_0)$, we can simply compute $||\bigcirc a||(\pi)$ as follows:

In Fig.1,
\begin{eqnarray*}
 ||\bigcirc a||(\pi)=\left\{
\begin{array}{ccc}
0.6,& $if$\ \pi\in Cyl(s_0s_1),\\
0.4,& $if$\ \pi\in Cyl(s_0s_3),\\
0,& $otherwise$.\\
\end{array}
\right.
 \end{eqnarray*}
 In Fig.2, \begin{eqnarray*}
 ||\bigcirc a||(\pi)=\left\{
\begin{array}{ccc}
0.6,& $if$\ \pi\in Cyl(s_0s_1),\\
0.4,& $if$\ \pi\in Cyl(s_0s_3),\\
0,& $otherwise$.\\
\end{array}
\right.
 \end{eqnarray*}
 In Fig.3,
 \begin{eqnarray*}
 ||\bigcirc a||(\pi)=\left\{
\begin{array}{ccc}
0.8,& $if$\ \pi\in Cyl(s_0s_1),\\
1,& $if$\ \pi\in Cyl(s_0s_3),\\
0,& $otherwise$.\\
\end{array}
\right.
 \end{eqnarray*}
 In Fig.4, \begin{eqnarray*}
 ||\bigcirc a||(\pi)=\left\{
\begin{array}{ccc}
0.8,& $if$\ \pi\in Cyl(s_0s_1),\\
0.9,& $if$\ \pi\in Cyl(s_0s_3),\\
0,& $otherwise$.\\
\end{array}
\right.
 \end{eqnarray*}
 We can see that even in a PKS as in Fig.3, the path formula $\bigcirc a$ in GPoCTL is not crisp. As we know, all formulae in PoCTL, including state and path formulae, are crisp, see \cite{li13}. The semantics of GPoCTL, compared with that of PoCTL, contains more possibility information.
 Furthermore, using Algorithm 2, we can give the semantics of GPoCTL formulae $Po(\bigcirc (a\wedge b))$ and $Po(b\sqcup c)$ in the GPKS as shown in Fig.1 as follows, where $X^T$ denotes the transposed fuzzy matrix of $X$.

 $Po(\bigcirc (a\wedge b))=(Po(s\models \bigcirc (a\wedge b)))_{s\in S}=P\circ D_{a\wedge b}\circ r_P=(0.5,0.4,0,0.5)^T$,

 $Po(b\sqcup c)=(Po(s\models b\sqcup c))_{s\in S}=(D_b\circ P)^{\ast}\circ D_c\circ r_P=(0.6,0.5,0.7,0.6)^T$,

 \noindent where $P=\left(\begin{array}{cccc}
0&0.8&0&0.9\\
0&0&0.2&0.5\\
0&0&0.9&0\\
0&0.7&0.6&0.4
\end{array}
\right)$,
$D_{a\wedge b}=diag(0.8,0.6,0,0.4)$, $D_b=diag(0.8,1,0,0.5)$, $D_c=diag(0,0,0.7,1)$ and $r_P=(0.6,0.5,0.9,0.6)^T$.}

\end{example}

\section{Semantics interpretation of GPoCTL in possibilistic Kripke structures and restricted GPoCTL}

Another view of quantitative GPoCTL model checking can be presented as follows: For a given interval $J\subseteq [0,1]$, and for a state formula $\Phi$ in GPoCTL, determine whether $||\Phi||(s)\in J$ for any state $s\in S$. Corresponding to this model checking, a related crisp formula $\Phi_J$ is defined using the semantics of $\Phi$ under a GPKS $M$ as, \begin{equation}
s\models \Phi_J \ {\rm iff} \ ||\Phi||(s)\in J.
 \end{equation}
 In fact, the formula $\Phi_J$ can be decided by the model-checking algorithm in the above section.

Concretely, for an atomic formula $a$ in $AP$,  states formulae $\Phi,\Psi$,  and a path formula $\varphi$, we have
\begin{eqnarray*}
s\models a_J  & {\rm iff} \ L(s,a)\in J;\\
s\models(\neg\Phi)_J & {\rm iff}\ 1-||\Phi||(s)\in J; \\
s\models(\Phi\wedge\Psi)_J  & {\rm iff} \ ||\Phi||(s)\wedge ||\Psi||(s)\in J;\\
s\models (Po(\varphi))_J  & {\rm iff} \ Po(s\models \varphi)\in J, \ {\rm we \ write}\ Po_J(\varphi) \ {\rm as} \ (Po(\varphi)_J \ {\rm in \ the \ sequel}.
\end{eqnarray*}

The formula $\Phi_J$ is very similar to that used in PoCTL. We shall study the relationship between GPoCTL and PoCTL. For this purpose, we shall restrict the GPKS to PKS when we talk about the semantics of GPoCTL, since we only consider the semantics of PoCTL in the frame of PKS. In this case, we shall see the much more simple form of $\Phi_J$.

In this section, all GPKS considered will be PKS. We have the following basic results.

\begin{definition}\label{de:equivalence of state formulae}
For two state formulae $\Phi$ and $\Psi$ in GPoCTL, and any interval $J,K\subseteq [0,1]$, $\Phi_J\equiv \Psi_K$ iff ``$s\models \Phi_J$ iff $s\models \Psi_K$'' holds for any PKS $M$.
\end{definition}

\begin{lemma}\label{le:equivalence of atomic formulae}
For any $a\in AP$, (1) $a_{[0,1]}\equiv true$, (2) $a_J\equiv a_{[0,0]}=\neg a$ for any interval $0\in J\subseteq [0,1)$, (3) $a_J\equiv a_{[1,1]}=a$ for any interval $1\in J\subseteq (0,1]$.
\end{lemma}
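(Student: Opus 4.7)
The plan is to exploit the assumption that the GPKS has been restricted to a PKS, so the labeling function is crisp, meaning $L(s,a)\in\{0,1\}$ for every state $s$ and every $a\in AP$. Consequently, $\|a\|(s)=L(s,a)$ takes only the two values $0$ and $1$, and membership in any interval $J\subseteq[0,1]$ is decided entirely by whether $0$ and/or $1$ lie in $J$. All three claims then reduce to a short case analysis.

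For part (1), I would observe that since $\{0,1\}\subseteq[0,1]$, the condition $L(s,a)\in[0,1]$ holds for every $s$ in every PKS $M$, so $s\models a_{[0,1]}$ is universally true, matching the semantics $\|true\|(s)=1$ required by Definition~\ref{de:equivalence of state formulae}.

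For part (2), I would use the hypothesis $0\in J$ and $J\subseteq[0,1)$ (hence $1\notin J$). Combined with $L(s,a)\in\{0,1\}$, this gives $L(s,a)\in J$ iff $L(s,a)=0$, which is the same condition as $L(s,a)\in[0,0]$, i.e.\ $s\models a_{[0,0]}$. I would then close the loop by noting that $\|\neg a\|(s)=1-L(s,a)=1$ iff $L(s,a)=0$, so $s\models\neg a$ coincides with $s\models a_{[0,0]}$. For part (3), the argument is symmetric: $1\in J$ and $0\notin J$ force $L(s,a)\in J$ iff $L(s,a)=1$, which is exactly $s\models a_{[1,1]}$, and by the semantics of atomic formulas this holds iff $s\models a$.

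There is no genuine obstacle here; the whole proof rests on the single observation that restricting to PKS makes $L(s,a)$ two-valued, after which the three equivalences are immediate from the definition $s\models\Phi_J$ iff $\|\Phi\|(s)\in J$. The one place to be careful is to write the argument for an \emph{arbitrary} PKS $M$, so that the equivalence $\Phi_J\equiv\Psi_K$ in the sense of Definition~\ref{de:equivalence of state formulae} is established, rather than just a pointwise coincidence in one fixed model.
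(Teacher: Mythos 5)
Your proposal is correct and follows essentially the same route as the paper's own proof: both arguments rest on the observation that in a PKS the labeling function is crisp, so $\|a\|(s)=L(s,a)\in\{0,1\}$, and then decide membership in $J$ by a two-value case analysis, identifying $a_{[0,0]}$ with $\neg a$ and $a_{[1,1]}$ with $a$ via the PoCTL satisfaction relation. No substantive difference.
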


\begin{proof}
For any PKS $M$ and any state $s$ in $M$, we have the following observation.

(1) $s\models a_{[0,1]}$ iff $||a||(s)\in [0,1]$. Since $||a||(s)\in [0,1]$ always holds, it follows that $s\models a_{[0,1]}$. Note that $s\models true$ holds for any state $s$, we then have $a_{[0,1]}\equiv true$.

(2) $s\models a_{[0,0]}$ iff $||a||(s)=0$ iff $a\not\in L(s)$ iff $s\not\models a$ iff $s\models \neg a$. Note that for a PKS $M$, the labeling function $L$ is crisp, i.e., $||a||(s)=L(s,a)=0$ or $1$, it follows that, for any interval $J\subseteq [0,1)$ such that $0\in J$, $||a||(s)\in J$ iff $||a||(s)=0$, i.e., $s\models a_J$ iff $||a||(s)=0$. Hence, $a_J\equiv a_{[0,0]}=\neg a$ for any interval $0\in J\subseteq [0,1)$.

(3) $s\models a_{[1,1]}$ iff $||a||(s)=1$ iff $a\in L(s)$ iff $s\models a$. Note that for a PKS $M$, the labeling function $L$ is crisp, i.e., $||a||(s)=L(s,a)=0$ or $1$, it follows that, for any interval $J\subseteq (0,1]$ such that $1\in J$, $||a||(s)\in J$ iff $||a||(s)=1$, i.e., $s\models a_J$ iff $||a||(s)=1$. Hence, $a_J\equiv a_{[1,1]}=a$ for any interval $1\in J\subseteq (0,1]$.
\end{proof}

By the above lemma, we can write $a$ as $a_{[1,1]}$ and $\neg a$ as $a_{[0,0]}$. Then it holds that $s\models a$ iff $a\in L(a)$ and $s\models \neg a$ iff $a\not\in L(s)$. From atomic formulae $a$ in $AP$, we can infer any state formulae of PoCTL from state formulae of GPoCTL, as presented in the following two theorems.

\begin{theorem}\label{th:PoCTL is a proper part of GPoCTL}
For any state formula $\Phi$ in GPoCTL, and any interval $J\subseteq [0,1]$ with rational bounds, $\Phi_J$ is a state formula of PoCTL, i.e., there is an equivalent state formula $\Psi$ in PoCTL such that $\Phi_J\equiv \Psi$.
\end{theorem}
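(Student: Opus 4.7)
My plan is to prove the theorem by structural induction on $\Phi$. Three features of the restricted setting drive the whole argument: (i) the labeling function $L$ of a PKS is crisp, so $\|a\|(s)\in\{0,1\}$; (ii) both $I$ and $P$ are normal, hence by Proposition 3.1 the function $r_P$ is identically $1$; and (iii) consequently the path weight $Po^{M_s}(\pi)=\bigwedge_i P(s_i,s_{i+1})$ depends only on transitions. These three facts mean the GPoCTL semantics, though quantitative, collapses at each level into a supremum/infimum of $P$-values meeting state-formula values, which is exactly the shape handled by PoCTL's $Po_J$ operator.

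The base cases are quick: $(true)_J\equiv true$ if $1\in J$ and $\equiv \neg true$ otherwise, and $a_J$ is already handled by Lemma 6.1. For $\neg\Psi$ I write $(\neg\Psi)_J\equiv \Psi_{1-J}$, where $1-J$ is again an interval with rational bounds, and apply the inductive hypothesis. For $\Phi_1\wedge\Phi_2$ I use the identity $\min(x,y)\in[u,v]\Leftrightarrow x\geq u\wedge y\geq u\wedge(x\leq v\vee y\leq v)$ to produce the PoCTL translation
\[
(\Phi_1\wedge\Phi_2)_{[u,v]}\equiv (\Phi_1)_{[u,1]}\wedge(\Phi_2)_{[u,1]}\wedge\bigl((\Phi_1)_{[0,v]}\vee(\Phi_2)_{[0,v]}\bigr),
\]
with the obvious variants for half-open and open intervals; induction makes every conjunct a PoCTL formula.

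For $\Phi=Po(\varphi)$ I treat each of the four forms of $\varphi$ uniformly. The reduction trick is: since $r_P\equiv 1$, a value $Po(s\models\varphi)$ is always a supremum, over finite or infinite witnessing paths, of a meet of transition values $P(s_j,s_{j+1})$ and state-formula values $\|\Phi_i\|(s_j)$. Hence ``$Po(s\models\varphi)\geq u$'' is equivalent to the existence of a witnessing path all of whose transitions \emph{and} all of whose relevant state-formula values are $\geq u$. The second constraint is captured by the thresholded subformulae $(\Phi_i)_{[u,1]}$, which are PoCTL formulae by induction. This gives the translations
\[
(Po(\bigcirc\Psi))_{[u,1]}\equiv Po_{[u,1]}(\bigcirc \Psi_{[u,1]}),\qquad (Po(\square\Psi))_{[u,1]}\equiv Po_{[u,1]}(\square\Psi_{[u,1]}),
\]
\[
(Po(\Phi_1\sqcup\Phi_2))_{[u,1]}\equiv Po_{[u,1]}((\Phi_1)_{[u,1]}\sqcup(\Phi_2)_{[u,1]}),
\]
and analogously for $\sqcup^{\leq n}$. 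The upper-bound side ``$Po(s\models\varphi)\leq v$'' I dispatch by negating the $>v$ assertion and using the same template with interval $(v,1]$ and thresholded subformulae $(\Phi_i)_{(v,1]}$; a general $[u,v]$ (or half-open variant) is the conjunction of its lower and upper parts.

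The main obstacle will be the $\leq v$ direction, because it is a universal statement over all paths rather than an existential one, so its PoCTL reformulation must proceed by negating an existential possibility assertion. One has to check carefully that the strict-bound intervals $(v,1]$ that arise after negation are legal PoCTL intervals with rational bounds (they are, since $v\in\mathbb{Q}$), and that the $r_P\equiv 1$ collapse is indeed valid in every subcase (this is where normality of the PKS is essential, and why the analogue fails for general GPKS). Once this bookkeeping is settled, the induction closes and the explicit PoCTL witness $\Psi$ can be read off from each case.
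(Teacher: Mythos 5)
Your proposal is correct and follows essentially the same route as the paper: induction on the structure of $\Phi$, splitting $J=[u,v]$ into a lower-bound part and an upper-bound part, pushing the threshold $\geq u$ (resp.\ the strict threshold $>v$ obtained by contraposition) inside the path operators onto the state subformulae, and relying on normality of the PKS (so that $r_P\equiv 1$) and crispness of $L$ exactly where you say. The explicit translations you give for $\wedge$, $\neg$, $\bigcirc$, $\sqcup$, $\sqcup^{\leq n}$ and $\square$ coincide with the paper's Cases 1--6 (the paper adds a redundant conjunct $\Psi_{\leq v}$ in the until case, which your version absorbs), so no further comparison is needed.
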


\begin{proof} The proof is proceeded inductively on the length of formula $\Phi$, $|\Phi|$.
For any PKS $M$ and any state $s$ in $M$, we have the following discussion.

If $|\Phi|=1$, then $\Phi=a\in AP$ or $\Phi=true$, by Lemma \ref{le:equivalence of atomic formulae}, $\Phi_J$ is a PoCTL state formula.

Assume that $\Phi_J$ is a PoCTL state formula for any GPoCTL state formula $\Phi$ with length $|\Phi|\leq n$. For a GPoCTL formula $\Phi$ with length $n+1$, we want to show that $\Phi_J$ is a PoCTL state formula for any interval $J$. There are four forms of the interval $J$, that is, $J=[u,v], (u,v], [u,v)$ or $(u,v)$ for $u,v\in [0,1]$. We give the proof for the case of the closed interval $J=[u,v]$, other cases are completely the same and thus omitted. In the following, $J$ is always the closed interval $[u,v]$.

There are six cases to be considered.

Case 1: $\Phi=\pp\wedge \ppp$ for two GPoCTL state formulae $\pp$ and $\ppp$.

Write $\Phi_{\geq u}=\Phi_{[u, 1]}$ and $\Phi_{\leq v}=\Phi_{[0,v]}$. Since $\Phi_{[u,v]}=\Phi_{\geq u}\wedge \Phi_{\leq v}$, it suffices to calculate $\Phi_{\geq u}$ and $\Phi_{\leq v}$.

Note that $s\models \Phi_{\geq u}$ iff $||\pp||(s)\wedge ||\ppp||(s)\geq u$ iff $||\pp||(s)\geq u$ and $||\ppp||(s)\geq u$ iff $s\models \pp_{\geq u}$ and $s\models \ppp_{\geq u}$ iff $s\models \pp_{\geq u} \wedge \ppp_{\geq u}$.

Therefore, $\Phi_{\geq u}\equiv\pp_{\geq u}\wedge \ppp_{\geq u}$.

Note that $s\models \Phi_{\leq v}$ iff $||\pp||(s)\wedge ||\ppp||(s)\leq v$ iff $||\pp||(s)\leq v$ or $||\ppp||(s)\leq v$ iff $s\models \pp_{\leq v}$ or $s\models \ppp_{\leq v}$ iff $s\models \pp_{\leq v} \vee \ppp_{\leq v}$.

Therefore, $\Phi_{\leq v}\equiv\pp_{\leq v}\vee \ppp_{\leq v}$.

Hence, $\Phi_J=\Phi_{\geq u}\wedge \Phi_{\leq v}\equiv (\pp_{\geq u}\wedge \ppp_{\geq u})\wedge (\pp_{\leq v}\vee \ppp_{\leq v})$. By the induction, we know that $\Phi_J$ is a PoCTL state formula.

Case 2: $\Phi=\neg \pp$ for a GPoCTL formula $\pp$.

Note that, $s\models \Phi_J$ iff $u\leq ||\Phi||(s)\leq v$, iff $u\leq ||\neg\pp||(s)\leq v$, iff $u\leq 1-||\pp||(s)\leq v$, iff $1-v\leq ||\pp||(s)\leq 1-u$, iff $s\models \pp_{[1-v,1-u]}$.

Therefore, $\Phi_J\equiv \pp_{DJ}$. By the induction, we have $\Phi_J$ is a PoCTL state formula.

Case 3: $\Phi=Po(\bigcirc\pp)$.

Note that $s\models Po_{\geq u}(\bigcirc \pp)$ iff $\bigvee_{s_1\in S}P(s,s_1)\wedge ||\pp||(s_1)\geq u$, iff there exists a state $s_1$ such that $P(s,s_1)\geq u$ and $\pp(s_1)\geq u$, iff there exists a state $s_1$ such that $P(s,s_1)\geq u$ and $s_1\models \pp_{\geq u}$, iff $Po^{M_s}(\{\pi\in Paths(s) | \pi\models \bigcirc \pp_{\geq u}\})\geq PO^{M_s}(Cyl(ss_1))=P(s,s_1)\geq u$, iff $s\models Po_{\geq u}(\bigcirc \pp_{\geq u})$.

Therefore, $Po_{\geq u}(\bigcirc\pp)\equiv Po_{\geq u}(\bigcirc \pp_{\geq u})$.

Note that $s\models Po_{\leq v}(\bigcirc \pp)$ iff $\bigvee_{s_1\in S}P(s,s_1)\wedge ||\pp||(s_1)\leq v$, iff for any state $s_1$, we have $P(s,s_1)\wedge \pp(s_1)\leq v$, iff for any state $s_1$, $P(s,s_1)\leq v$ or $||\pp||(s_1)\leq v$, iff for any state $s_1$, if $||\pp||(s_1)>v$, then $P(s,s_1)\leq v$, iff  for any state $s_1$, if $s_1\models \pp_{>v}$, then $P(s,s_1)\leq v$, iff $Po^{M_s}(\{\pi\in Paths(s) | \pi\models \bigcirc \pp_{>v}\})=Po^{M_s}(\cup\{Cyl(ss_1) | s_1\models\pp_{>v}\}=\bigvee\{P(s,s_1) | s_1\models\pp_{>v}\}\leq v$, iff $s\models Po_{\leq v}(\bigcirc \pp_{>v})$, where $\pp_{>v}=\pp_{(v,1]}$.

Therefore, $Po_{\leq v}(\bigcirc \pp)\equiv Po_{\leq v}(\bigcirc \pp_{>v})$.

Hence, $Po_J(\bigcirc \pp)=Po_{\geq u}(\bigcirc \pp)\wedge Po_{\leq v}(\bigcirc \pp)\equiv Po_{\geq u}(\bigcirc\pp_{\geq u})\wedge Po_{\leq v}(\bigcirc\pp_{>v})$. By the induction, we know that $Po_J(\bigcirc \pp)$ is a PoCTL state formula.

Case 4: $\Phi=Po(\pp\sqcup \ppp)$.

Note that $s\models \Phi_{\geq u}$ iff there exists a path $\pi=s_0s_1\cdots$, and the integer $j$, such that $\bigwedge_{k\leq j}P(s_{k-1},s_k)\wedge \bigwedge_{k<j}||\pp||(s_k)\wedge ||\ppp||(s_j)\geq u$, iff there exists a path $\pi=s_0s_1\cdots$, and a $j$, such that $\bigwedge_{k\leq j}P(s_{k-1},s_k)\geq u$ and $\wedge \bigwedge_{k<j}||\pp||(s_k)\wedge ||\ppp||(s_j)\geq u$, iff there exists a path $\pi\in Cyl(s_0\cdots s_j)$ such that $\bigwedge_{k\leq j}P(s_{k-1},s_k)\geq u$ and $\pi\models \pp_{\geq u}\sqcup \ppp_{\geq u}$, iff $\bigvee\{Po^{M_s}(\pi) | \pi\in Paths(s), \pi\models \pp_{\geq u}\sqcup \ppp_{\geq u}\}\geq u$, iff $s\models Po_{\geq u}(\pp_{\geq u}\sqcup \ppp_{\geq u})$.

Therefore, $Po_{\geq u}(\pp\sqcup \ppp)\equiv Po_{\geq u}(\pp_{\geq u}\sqcup \ppp_{\geq u})$.

Note that $s\models \Phi_{\leq v}$ iff, for any path $\pi=s_0s_1\cdots, ||\pp||(s)\leq v$, and for any $j$,  $\bigwedge_{k\leq j}P(s_{k-1},s_k)\wedge \bigwedge_{k<j}||\pp||(s_k)\wedge ||\ppp||(s_j)\leq v$, iff for any path $\pi=s_0s_1\cdots, ||\pp||(s)\leq v$, and for any $j$,  $\bigwedge_{k\leq j}P(s_{k-1},s_k)\leq v$ or $ \bigwedge_{k<j}||\pp||(s_k)\wedge ||\ppp||(s_j)\leq v$, iff for any path $\pi=s_0s_1\cdots, ||\pp||(s)\leq v$, and for any $j$,  if $ \bigwedge_{k<j}||\pp||(s_k)\wedge ||\ppp||(s_j)> v$, then $\bigwedge_{k\leq j}P(s_{k-1},s_k)\leq v$, iff for any path $\pi=s_0s_1\cdots, ||\pp||(s)\leq v$, and for any $j$,  if $\pi\models \pp_{>v}\sqcup \ppp_{>v}$, then $\bigwedge_{k\leq j}P(s_{k-1},s_k)\leq v$, iff $s\models \ppp_{\leq v}$, and $\bigvee\{Po^{M_s}(\pi) | \pi\in Paths(s), \pi\models \pp_{>v}\sqcup \ppp_{>v}\}\leq v$, iff $s\models Po_{\leq v}(\pp_{>v}\sqcup \ppp_{>v})\wedge \ppp_{\leq v}$.

Therefore, $Po_{\leq v}(\pp\sqcup \ppp)\equiv Po_{\leq v}(\pp_{>v}\sqcup \ppp_{>v})\wedge \ppp_{\leq v}$.

Hence, $Po_J(\pp\sqcup \ppp)\equiv (Po_{\geq u}(\pp_{\geq u}\sqcup \ppp_{\geq u}))\wedge (Po_{\leq v}(\pp_{>v}\sqcup \ppp_{>v})\wedge \ppp_{\leq v})$. By the induction, we know that $Po_J(\pp\sqcup \ppp)$ is a GPoCTL state formula.

Case 5: $\Phi=Po(\pp\sqcup^{\leq n} \ppp)$.

Similar to case 4, we have $Po_J(\pp\sqcup^{\leq n} \ppp)\equiv (Po_{\geq u}(\pp_{\geq u}\sqcup^{\leq n} \ppp_{\geq u}))\wedge (Po_{\leq v}(\pp_{>v}\sqcup^{\leq n} \ppp_{>v})\wedge \ppp_{\leq v})$. By the induction, we know that $Po_J(\pp\sqcup^{\leq n} \ppp)$ is a GPoCTL state formula.

Case 6: $\Phi=\square\pp$.

Note that $s\models Po_{\geq u}(\square\pp)$ iff $\bigvee_{\pi\in Paths(s)}\bigwedge_{j=0}^{\infty}P(s_j,s_{j+1})\wedge \bigwedge_{i=0}^{\infty}||\pp||(s_j)\geq u$, iff there exist a path $\pi=s_0s_1\cdots$ with $s_0=s$ such that $\bigwedge_{j=0}^{\infty}P(s_j,s_{j+1})\geq u$ and $\bigwedge_{i=0}^{\infty}||\pp||(s_j)\geq u$, iff there exist a path $\pi=s_0s_1\cdots$ with $s_0=s$ such that $Po^{M_s}(\pi)\geq u$ and $\pi\models \square\pp_{\geq u}$, iff $\bigvee\{Po^{M_s}(\pi) | \pi\in Paths(s), \pi\models \square\pp_{\geq u}\}\geq u$, iff $s\models Po_{\geq u}(\square\pp_{\geq u})$.

Therefore, $Po_{\geq u}(\square\pp)=Po_{\geq u}(\square\pp_{\geq u})$.

Note that $s\models Po_{\leq v}(\square\pp)$ iff $\bigvee_{\pi\in Paths(s)}\bigwedge_{j=0}^{\infty}P(s_j,s_{j+1})\wedge \bigwedge_{i=0}^{\infty}||\pp||(s_j)\leq v$, iff for any path $\pi=s_0s_1\cdots$ with $s_0=s$, $\bigwedge_{j=0}^{\infty}P(s_j,s_{j+1})\wedge \bigwedge_{i=0}^{\infty}||\pp||(s_j)\leq v$, iff for any path $\pi=s_0s_1\cdots$ with $s_0=s$, $\bigwedge_{j=0}^{\infty}P(s_j,s_{j+1})\leq v$ or $\bigwedge_{i=0}^{\infty}||\pp||(s_j)\leq v$, iff for any path $\pi=s_0s_1\cdots$ with $s_0=s$, if $\bigwedge_{i=0}^{\infty}||\pp||(s_j)> v$, then $\bigwedge_{j=0}^{\infty}P(s_j,s_{j+1})\leq v$, iff for any path $\pi=s_0s_1\cdots$ with $s_0=s$, if $\pi\models \square\pp_{>v}$, then $Po^{M_s}(\pi)\leq v$, iff $\bigvee\{ Po^{M_s}(\pi) | \pi\in Paths(s), \pi\models \square\pp_{>v}\}\leq v$, iff $s\models Po_{\leq v}(\square\pp_{>v})$.

Therefore, $Po_{\leq v}(\square\pp)\equiv Po_{\leq v}(\square\pp_{>v})$.

Hence, $Po_J(\square\pp)\equiv Po_{\geq u}(\square\pp_{\geq u})\wedge Po_{\leq v}(\square\pp_{>v})$. By the induction, we have $Po_J(\square\pp)$ is a PoCTL state formula.
\end{proof}

For a GPoCTL state formula $\Phi$ and interval $J\subseteq [0,1]$, when we use $\Phi_J$ as a state formula and give its semantics in PKS, we can get a restricted version of GPoCTL as defined as follows.

\begin{definition}\label{de:RGPoCTL}
(Syntax of RGPoCTL) Restricted generalized possibilistic CTL (RGPoCTL, in short) state formulae over the set $AP$ of atomic propositions are formed according to the following grammar:
\begin{center}
$\Phi ::= true\mid  a \mid\Phi_{1} \wedge \Phi_{2}\mid \neg \Phi\mid Po_J(\varphi)$
\end{center}
where $a\in AP$, $\varphi$ is a RPoCTL path formula and $J$ is an interval of $[0,1]$ with rational bounds.

RPoCTL path formulae are formed according to the following grammar:

\begin{center}
$\varphi::=\bigcirc \Phi \mid \Phi_{1}\sqcup \Phi_{2}\mid \Phi_{1}\sqcup^{\leq n} \Phi_{2}\mid \square\Phi$
\end{center}
where $\Phi$, $ \Phi_{1}$, and $ \Phi_{2}$ are state formulae and $n\in\mathbb{N}$.
\end{definition}

The semantics of RGPoCTL formulae is interpreted in PKS.
Let $a\in AP$ be an atomic proposition, $M=(S,P,I,AP,L)$ be a possibilistic Kripke structure, $s$ be a state, $\Phi$, $\Psi$ be RGPoCTL state formulae, and $\varphi$ be a RGPoCTL path formula. The satisfaction relation $\models$ is defined for state formulae by,
\begin{eqnarray*}
s\models a  & {\rm iff} \ a\in L(s);\\
s\models\neg\Phi & {\rm iff}\ s\not\models\Phi; \\
s\models\Phi\wedge\Psi  & {\rm iff} \ s\models\Phi \ {\rm and}\ s\models\Psi;\\
s\models Po_{J}(\varphi)  & {\rm iff} \ Po(s\models \varphi)\in J.
\end{eqnarray*}

\noindent where $Po(s\models \varphi)=\bigvee\{Po^{M_s}(\pi)\wedge ||\Phi||(\pi) | \pi\in Paths(s)\}$. For path formula $\varphi$, and $\pi\in Paths(M)$, its semantics is a fuzzy set $||\varphi||: Paths(M)\rw [0,1]$, which is defined recursively as follows,

\begin{eqnarray*}
||\bigcirc\Phi||(\pi)= \left\{ \begin{array}{ll} P(\pi([0]),\pi([1])), & {\rm if} \ \pi\models \bigcirc \Phi;\\
0, & {\rm otherwise}.
\end{array}\nonumber
\right.
\end{eqnarray*}
\begin{equation*}
||\Phi\sqcup\Psi||(\pi)=\bigvee_k\{\bigwedge_{j=0}^k P(s_j,s_{j+1})| {\rm for\ any}\ j<k, s_j\models \Phi, {\rm and}\ s_k\models \psi\};
\end{equation*}
\begin{eqnarray*}
||\Phi\sqcup^{\leq n}\Psi||(\pi)=\bigvee_{k\leq n}\{\bigwedge_{j=0}^k P(s_j,s_{j+1})| {\rm for\ any}\ j<k, s_j\models \Phi, {\rm and}\ s_k\models \psi\};
\end{eqnarray*}
\begin{equation*}
||\square\Phi||(\pi)= \left\{ \begin{array}{ll} \bigwedge_{j=0}^{\infty}P(s_j,s_{j+1}), & {\rm if} \ \pi\models \square \Phi;\\
0, & {\rm otherwise}.
\end{array}
\right.
\end{equation*}

\begin{theorem}\label{th:RGPoCTL=PoCTL}
The state formulae of RGPoCTL are the same as those of PoCTL.
\end{theorem}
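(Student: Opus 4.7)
The plan is to establish the equality as a pair of inclusions. The easy direction is that every PoCTL state formula is already an RGPoCTL state formula: inspecting Definition~\ref{de:RGPoCTL} and the PoCTL syntax recalled in Section~2 shows that the two grammars agree on the state-formula productions ($true$, atomic propositions, Boolean combinations, and $Po_J(\varphi)$) while RGPoCTL's path-formula grammar strictly contains the PoCTL one, differing only by the additional clause $\square\Phi$. Since both languages are interpreted over the same class of PKS and the satisfaction clauses of the shared constructors coincide, this syntactic inclusion is also a semantic one.

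For the converse, I would argue by structural induction on an RGPoCTL state formula $\Phi$, producing at each step an equivalent PoCTL state formula. The base cases $\Phi=true$ and $\Phi=a\in AP$ are already PoCTL formulae, and the Boolean cases $\Phi=\neg\pp$ and $\Phi=\pp\wedge\ppp$ follow by applying the induction hypothesis to the immediate subformulae. When $\Phi=Po_J(\varphi)$ with $\varphi\in\{\bigcirc\Psi,\,\Psi_1\sqcup\Psi_2,\,\Psi_1\sqcup^{\leq n}\Psi_2\}$, after invoking the induction hypothesis on the state subformulae occurring inside $\varphi$ the result is already of PoCTL shape, so nothing more is needed.

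The only obstacle is the case $\Phi=Po_J(\square\Psi)$, because $\square$ is not a PoCTL path-formula production; this is where I would rely on the reduction performed as Case~6 in the proof of Theorem~\ref{th:PoCTL is a proper part of GPoCTL}. Writing $J=[u,v]$ (the other three interval shapes are handled identically, with the open/closed bounds adjusted), that argument yields $Po_{\geq u}(\square\Psi)\equiv Po_{\geq u}(\square\Psi_{\geq u})$ and $Po_{\leq v}(\square\Psi)\equiv Po_{\leq v}(\square\Psi_{>v})$, so $Po_J(\square\Psi)\equiv Po_{\geq u}(\square\Psi_{\geq u})\wedge Po_{\leq v}(\square\Psi_{>v})$. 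In the PKS setting the labelling is crisp, so once the induction hypothesis has been applied to $\Psi$ the auxiliary formulae $\Psi_{\geq u}$ and $\Psi_{>v}$ collapse, by Lemma~\ref{le:equivalence of atomic formulae} together with the Boolean reductions used in Cases~1--2 of the previous proof, to genuine PoCTL state formulae (each of them equal to $\Psi$, to $true$, or to $\neg true$ depending on the bounds). The whole expression then becomes a PoCTL state formula, reading $\square$ as the always-operator already implicitly admitted into PoCTL in the previous theorem. Combining this with the other inductive cases closes the converse inclusion and thereby proves the equivalence of the two classes.
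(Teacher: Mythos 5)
Your argument leaves the essential step unproved. The state-formula grammars of PoCTL and RGPoCTL are literally identical (and the paper treats $\square$ as an admissible PoCTL path formula throughout, e.g.\ in Eqs.~(\ref{eq:necessity 1})--(\ref{eq:necessity 5}) and in Case~6 of Theorem~\ref{th:PoCTL is a proper part of GPoCTL}), so there is essentially no syntactic work to do; what the theorem actually requires is that $Po_J(\varphi)$ \emph{means} the same thing in both logics, because the two logics attach different semantics to the path formula $\varphi$. In RGPoCTL one has $Po^r(s\models\varphi)=\bigvee\{Po^{M_s}(\pi)\wedge ||\varphi||(\pi)\mid \pi\in Paths(s)\}$, where $||\varphi||(\pi)$ is a fuzzy value built from the transition possibilities along $\pi$, whereas in PoCTL one has $Po^p(s\models\varphi)=Po^{M_s}(\{\pi\in Paths(s)\mid \pi\models\varphi\})$. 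Your sentence ``the satisfaction clauses of the shared constructors coincide'' asserts, without justification, exactly the nontrivial content of the theorem. The paper's proof consists precisely of checking, for each of the four path constructors, that the extra conjunct $||\varphi||(\pi)$ is absorbed because $Po^{M_s}(\pi)=I(s)\wedge\bigwedge_{i}P(s_i,s_{i+1})$ already lies below every transition possibility occurring in $||\varphi||(\pi)$; that short computation is what your proof is missing.

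The detour through Case~6 of Theorem~\ref{th:PoCTL is a proper part of GPoCTL} for $Po_J(\square\Psi)$ is both unnecessary and misapplied. That case concerns a GPoCTL formula $\pp$ with genuinely fuzzy semantics $||\pp||:S\longrightarrow[0,1]$, and its output $Po_{\geq u}(\square\pp_{\geq u})\wedge Po_{\leq v}(\square\pp_{>v})$ still contains $\square$, so it does not eliminate the operator you were worried about --- you end up conceding that $\square$ is ``implicitly admitted'' into PoCTL anyway, which could have been said at the outset. Moreover, an RGPoCTL state subformula $\Psi$ is already crisp (its satisfaction relation is Boolean), so no thresholding arises, while your parenthetical claim that $\Psi_{\geq u}$ collapses to $\Psi$, $true$ or $\neg true$ is false for the fuzzy formulae to which Case~6 actually applies (e.g.\ $||Po(\bigcirc a)||(s)$ can take a value such as $0.8$ even in a PKS). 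What the $\square$ case needs is the same semantic identity as the other three, namely $\bigvee\{Po^{M_s}(\pi)\wedge\bigwedge_{j}P(\pi[j],\pi[j+1])\mid \pi\in Paths(s),\ \pi\models\square\Phi\}=\bigvee\{Po^{M_s}(\pi)\mid \pi\in Paths(s),\ \pi\models\square\Phi\}$, which again you do not establish.
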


\begin{proof}
From the definition of state formulae in PoCTL and RGPoCTL, we know that they have the same atomic foumulae $a\in AP$. The left is to show that they have the same state formula $Po_J(\varphi)$ for a path formula $\varphi$ and an interval $J$.

We use the superior $^r$ to represent RGPoCTL formula, and $^p$ to represent the PoCTL formula. It is sufficient to show that $Po^r_J(\varphi)=Po^p_J(\varphi)$ for the same path formula $\varphi$ (but with different semantics). This can be guaranteed by the fact $Po^r(s\models\varphi)=Po^p(s\models\varphi)$, where $\varphi$ has four forms, $\bigcirc\Phi$, $\Phi\sqcup \Psi$, $\Phi\sqcup^{\leq n} \Psi$ and $\square\Phi$. We prove the later in four cases as follows.

Case 1: $\varphi=\bigcirc\Phi$. In this case, $Po^r(s\models\varphi)=\bigvee\{Po^{M_s}(\pi)\wedge ||\bigcirc\Phi||(\pi) | \pi\in Paths(s)\}=\bigvee\{Po^{M_s}(\pi)\wedge P(s,\pi([1])) | \pi\in Paths(s)\}=\bigvee\{Po^{M_s}(\pi) | \pi\in Paths(s), \pi\models \bigcirc\Phi \}=Po^p(s\models\varphi)$.

Case 2: $\varphi=\Phi\sqcup \Psi$. In this case, $Po^r(s\models\varphi)=\bigvee\{Po^{M_s}(\pi)\wedge ||\Phi\sqcup \Psi||(\pi) | \pi\in Paths(s)\}=\bigvee\{Po^{M_s}(\pi) | \pi\in Paths(s), \pi\models \Phi\sqcup \Psi \}=Po^p(s\models\varphi)$.

Case 3: $\varphi=\Phi\sqcup^{\leq n} \Psi$. The proof is similar to that of the case 2.

Case 4: $\varphi=\square\Phi$. In this case, $Po^r(s\models\varphi)=\bigvee\{Po^{M_s}(\pi)\wedge \bigwedge_{j=0}^{\infty} P(\pi([j]),\pi([j+1])) | \pi\in Paths(s), \pi\models \square\Phi \}=\bigvee\{Po^{M_s}(\pi) | \pi\in Paths(s), \pi\models \square\Phi \}=Po^p(s\models\varphi)$.

Since the above fact, for a RGPoCTL path formula or a PoCTL path formula $\varphi$, we write $Po^r(s\models\varphi)$ and $Po^p(s\models\varphi)$ with the same symbol $Po(s\models \varphi)$, which have the same interpretation $Po^{M_s}(\{\pi\in Paths(s) |\pi\models \varphi\})$ for any PKS $M$.

Since $Po^r(s\models\varphi)=Po^p(s\models\varphi)$ for any state $s$ for any PKS $M$, it follows that $Po^r_J(\varphi)=Po^p_J(\varphi)$ for any path formula $\varphi$ and interval $J$. Hence, RGPoCTL and PoCTL have the same state formulae.

\end{proof}

 RGPoCTL and PoCTL have the same state formulae, but with different semantics of path formuae. In this sense, PoCTL can be seen as a qualitative version or a crisp counterpart of GPoCTL, where we interpret GPoCTL formulae in the frame of PKS models.

Moreover, if we further restrict the interval $J\subseteq [0,1]$ with the form $(0,1]$ (write $>0$ in short) and $[1]$ (write $=1$ in short), then we obtain a more narrow qualitative GPoCTL, which is the same as qualitative PoCTL as defined in \cite{li13}, where the system models are PKS models. In this case, CTL is a  proper subclass of qualitative PoCTL (as shown in \cite{li13}), and thus, CTL is a proper subclass of GPoCTL.

%{\red Problem 1: If we define $\Phi_1\equiv \Phi_2$ iff $(\Phi_1)_J\equiv(\Phi_2)_J$ for any $J$, what about $\equiv$? where $(\Phi_1)_J\equiv (\Phi_2)_J$ means that $s\models (\Phi_1)_J$ iff $s\models (\Phi_2)_J$ for any model $M$.}

%\section{Properties of GPoCTL}

%1. Robustness of GPoCTL?

%2. Bisimulation and simulation of GPoCTL? Preorder of GPKS?

\section{An illustrative example}

We consider the thermostat example given in \cite{chechik012,wu07}. A little revision is adopted for our purpose.

There are three models for the thermostat as shown in Fig.5. Fig.5(a) is a very
simple thermostat that can run a heater if the temperature falls below a desired threshold. The system
has one indicator ($Below$), a switch to turn it off and on ($Running$) and a variable indicating whether the heater is running ($Heat$). The system starts in state
$OFF$ and transits into $IDLE1$ when it is turned on, where it awaits the reading of the temperature
indicator. When the temperature is determined, the system transits either into  $IDLE2$ or into $HEAT$. The value of the temperature indicator is unknown in states $OFF$ and $IDLE1$.
We use three-valued GPKS: 1, 0 and 0.5 (Maybe), to model the system,
assigning $Below$ the value 0.5 in states $OFF$ and $IDLE1$ since the temperature is not determined in these two states, as depicted in Fig.5(a). Note that each state in this and the other two systems in Fig.5 contains a self-loop with the value $1$
which we omitted to avoid clutter.

Fig.5(b) shows another aspect of the thermostat system-running the air conditioner.
The behavior of this system is similar to that of the heater, with one difference:
this system handles the failure of the temperature indicator. If the temperature reading
cannot be obtained in states $AC$ or $IDLE2$, the system transits into state $IDLE1$.

Finally, Fig.5(c) gives a combined model, describing the behavior of the thermostat
that can run both the heater and the air conditioner. In this model, we use the same
three-valued GPKS. When the individual descriptions agree that the value
of a variable or transition is 1 (resp., 0), it is mapped into 1 (resp., 0) in the combined model; all other
values are mapped into 0.5.

For simplicity, we use the symbols $r,b,a,ac,h$ to represent the atomic propositions $Running$, $Below$, $Above$, $AC$ and $Heat$.

\begin{figure}[ht]
\begin{center}
\includegraphics[scale=1]{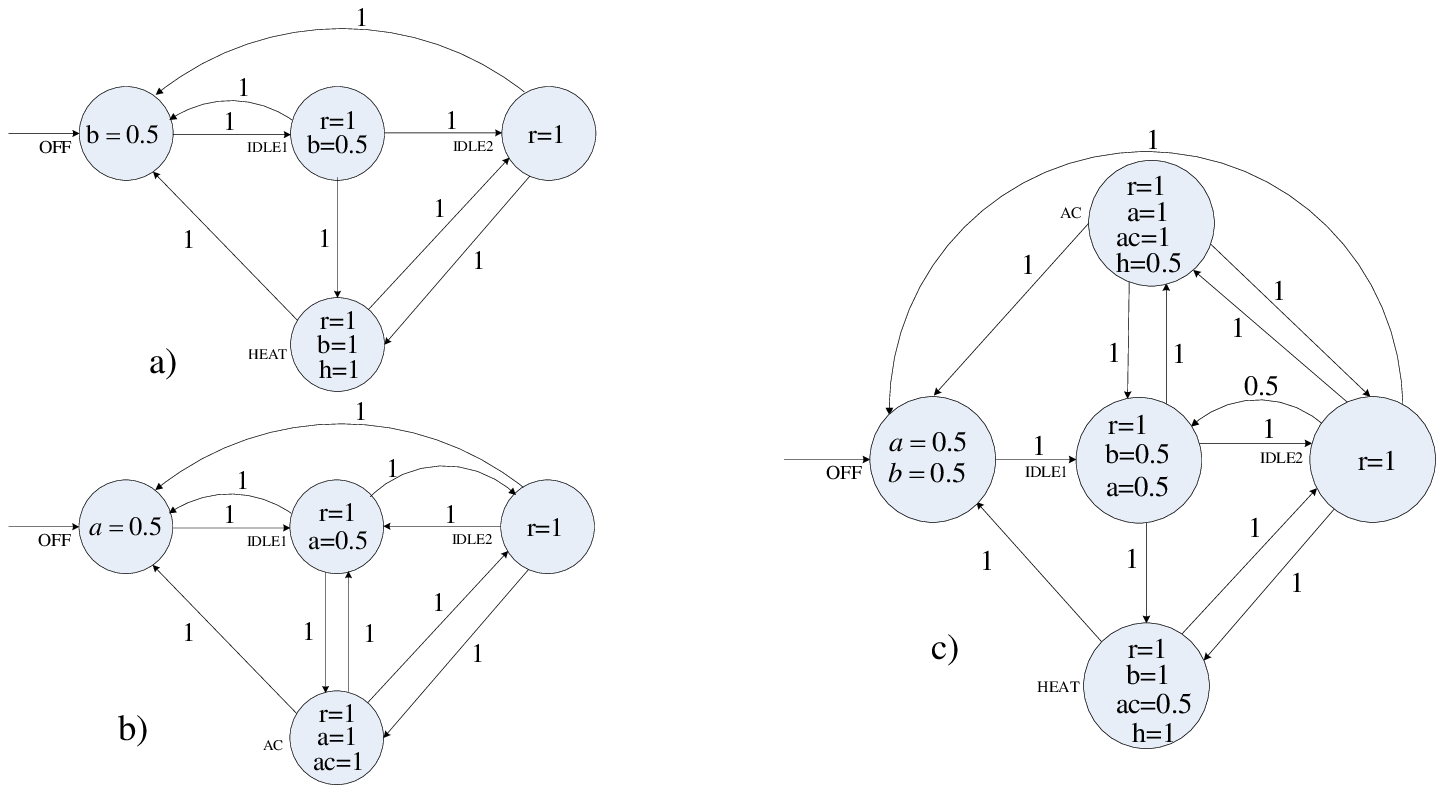}
\center{Fig.5.}Models of the thermostat. (a) Heat only; (b) AC only; (c) combined model.
\vspace{-0.3cm}
\end{center}
\end{figure}

For this thermostat model, we can ask a number of questions as presented in \cite{chechik012}:

Prop. 1. Can the system transit into $IDLE1$ from everywhere?

Prop. 2.  Can the heater be turned on when the temperature falls below a desired threshold?

Prop. 3. Can the system be turned off in every computation?

Prop. 4. Is heat on only if air conditioning is off?

Prop. 5. Can heat be on when the temperature is above a threshold desired?

 The above properties can be re-stated using possibility measures as follows:

Prop. 1p. What is the possibility (resp. necessity) that the system can transit into $IDLE1$ from everywhere?

Prop. 2p.  What is the possibility (resp. necessity) that the heater can be turned on when the temperature falls below a desired threshold?

Prop. 3p. What is the possibility (resp. necessity) that the system can be turned off in every computation?

Prop. 4p. What is the possibility (resp. necessity) that heat is on only if air conditioning is off?

Prop. 5p. What is the possibility (resp. necessity) that heat can be on when the temperature is above a threshold desired?

The above properties can be
described using GPoCTL formulae as presented in Table 1 and Table 2, respectively. The table also lists the values of these properties
in each of the models given in Fig.5. We use ``--'' to indicate that the result cannot be
obtained from this model. For example, the two individual models disagree on the question
of reachability of state $IDLE1$
from every state in the model, whereas the combined
model concludes that it is $0$. We obtain more useful information than those presented in \cite{chechik012,wu07}.

%$Po(\square Po(\bigcirc IDLE1))$, $Po(\neg Heat \sqcup Below)$, $Po(\square Po(\lozenge \neg Runing)))$, $Po(\square(\neg Ac\rw Heat))$, and $Po(\square(Above\rw Heat))$, respectively.

\vspace{-0.5cm}
\begin{center}
\begin{tabular}{|c|c|c|c|c|}
  \hline
  % after \\: \hline or \cline{col1-col2} \cline{col3-col4} ...
  Property & GPoCTL formula & Heat model & AC model & Combined model \\
  \hline
  Prop.1p & $Po(\square Po(\bigcirc IDLE1))$ & (1,1,0,0) & (1,1,1,1) & (1,1,0.5,1,0) \\
  Prop.2p & $Po(\neg Heat \sqcup Below)$ & (1,1,1,1) & -- & (1,1,1,0.5,1) \\
  Prop.3p & $Po(\square Po(\lozenge \neg Runing)))$ & (1,1,1,1) & (1,1,1,1) & (1,1,1,1,1) \\
  Prop.4p & $Po(\square(\neg Ac\rw Heat))$ & -- & -- & (0,0,0,1,1) \\
  Prop.5p & $Po(\square(Above\rw \neg Heat))$ & -- & -- & (1,1,1,0.5,1) \\
 \hline
\end{tabular}
\vspace{-0.4cm}
\center{Table 1.} Results of verifying properties of the thermostat system using possibility measure.

\end{center}

\vspace{-0.5cm}
\begin{center}
\begin{tabular}{|c|c|c|c|c|}
  \hline
  % after \\: \hline or \cline{col1-col2} \cline{col3-col4} ...
  Property & GPoCTL formula & Heat model & AC model & Combined model \\
  \hline
  Prop.1p & $Ne(\square Ne(\bigcirc IDLE1))$ & (0,0,0,0) & (0,0,0,0) & (0,0,0,0,0) \\
  Prop.2p & $Ne(\neg Heat \sqcup Below)$ & (0.5,0.5,0,1) & -- & (0.5,0.5,0,0,1) \\
  Prop.3p & $Ne(\square Ne(\lozenge \neg Runing)))$ & (0,0,0,0) & (0,0,0,0) & (0,0,0,0,0) \\
  Prop.4p & $Ne(\square(\neg Ac\rw Heat))$ & -- & -- & (0,0,0,0,0) \\
  Prop.5p & $Ne(\square(Above\rw \neg Heat))$ & -- & -- & (0.5,0.5,0.5,0.5) \\
 \hline
\end{tabular}
\vspace{-0.4cm}
\center{Table 2.} Results of verifying properties of the thermostat system using necessity measure.

\end{center}

As an illustrative example, let us show how to compute Prop.1p. Let $\Phi=Po(\bigcirc IDLE1)$, then $Po(\square Po(\bigcirc IDLE1))=Po(\square \Phi)$. By Algorithm 2, we have $||\Phi||=P_i\circ D_{IDLE1}^i\circ r_{P_i}$, and $||Po(\square\Phi)||$ is the greatest fixpoint of the operator $f(Z)=||\Phi||\wedge P\circ D_Z\circ r_P$, where $i=a,b,c$ denote GPKSs as shown in Fig.5(a)-(c). By a simple calculation, we have $Po(\square Po(\bigcirc IDLE1))=(1,1,0,0)$ for GPKS in Fig.5(a), $Po(\square Po(\bigcirc IDLE1))=(1,1,1,1)$ for GPKS in Fig.5(b), and $Po(\square Po(\bigcirc IDLE1))=(1,1,0.5,1,0)$ for GPKS in Fig.5(c).
It means that the system shown in Fig.5.(a) can transit into $IDLE1$ from the state $OFF$ (with possibility 1) and $IDLE1$ (with possibility 1) and could not transit from other states, and the system shown in Fig.5.(b) can transit into $IDLE1$ from everywhere (with possibility 1), and the system shown in Fig.5.(c) can transit into $IDLE1$ from state $OFF$ (with possibility 1), $IDLE1$ (with possibility 1), IDLE2 (with possibility 0.5) and $AC$ (with possibility 1), and could not transit from state $HEAT$.

On the other hand, let $\Psi=Ne(\bigcirc IDLE1)$, then $Ne(\square Ne(\bigcirc IDLE1))=Ne(\square \Psi)$. Since $||\Psi||=\neg Po(\bigcirc \neg IDLE1)$ and $Ne(\square \Psi)=\neg Po(\lozenge\neg \Psi))$, using Algorithm 2, by a simple calculation, we have $Ne(\square Ne(\bigcirc IDLE1))=(0,0,0,0)$ for GPKS in Fig.5(a) (b), and $Ne(\square Ne(\bigcirc IDLE1))=(0,0,0,0,0)$ for GPKS in Fig.5(c).
It means that it is unnecessary that the systems shown in Fig.5.(a), (b) and (c) could transit into $IDLE1$ from everywhere.

To sum up the results of Table 1 and Table 2 for Prop.1p, it is unnecessary that the systems shown in Fig.5.(a), (b) and (c) could transit into $IDLE1$ from everywhere. Furthermore, it is not possible that the system shown in Fig.5(a) can transit into $IDLE1$ from states $IDLE2$ and $HEAT$, and it is not possible that the system shown in Fig.5(c) can transit into $IDLE1$ from $HEAT$. It is possible that the system shown in Fig.5.(a) transits into $IDLE1$ from the state $OFF$ (with possibility 1) and $IDLE1$ (with possibility 1), and the system shown in Fig.5.(b) transits into $IDLE1$ from everywhere (with possibility 1), and the system shown in Fig.5.(c) can transit into $IDLE1$ from state $OFF$ (with possibility 1), $IDLE1$ (with possibility 1), IDLE2 (with possibility 0.5) and $AC$ (with possibility 1).

%{\bf Some explanation of the above results}

\section{Conclusion}

We introduced possibilistic computation tree logic model checking based on generalized measures, which forms an extension of PoCTL model checking introduced in \cite{li13}. First, the system models were described as generalized possibilistic Kripke structures, and the properties of the systems were specified as generalized computation tree logic formulae. Then the corresponding model checking was discussed, and Algorithm 1-2 was provided to solve the generalized computation tree logic model-checking problems. Next, GPoCTL and PoCTL were compared in detail. Compared with PoCTL, GPoCTL contains more possible and necessary information, even if we use PKS models. The logic GPoCTL is similar to CTL in multi-valued case. Of course, some measure information, including possibility measure and necessity measure, is contained in GPoCTL, whereas there is no measure information in multi-valued CTL model checking. An illustrative example in multi-valued case was used to verify our method.

Further case study needs to be provided. Another direction is the equivalence and abstraction techniques in GPoCTL. For linear-time properties, LTL model checking based on generalized measures using GPKS as system model is another future direction to study (cf.\cite{li12}).

\section*{Acknowledgments}

The authors would like to thank the anonymous referees for helping them refine the ideas
presented in this paper and improve the clarity of the presentation.

\section*{References}

\end{document}